\newtheorem{theorem}{Theorem}[section]
\newtheorem{lemma}{Lemma}[section]
\newtheorem{definition}{Definition}[section]
\newtheorem{remark}{Remark}[section]
\newcommand{\etal }{\mbox{\emph{et al. }}}
\journal{Communications in Nonlinear Science and Numerical Simulation}
\begin{document}

\begin{frontmatter}
\title{A fractional-order compartmental model for the Spread of the {COVID-19}  pandemic}
\author{T. A. Biala*}
\ead{*tab7v@mtmail.mtsu.edu}
\author{A. Q. M. Khaliq}
\ead{Abdul.Khaliq@mtsu.edu}
\address{ Center for Computational Science and Department of Mathematical Sciences,\\
Middle Tennessee State University, \\
 Murfreesboro, TN 37132-0001, USA.}

\begin{abstract}
\quad 
We propose a time-fractional compartmental model (SEI$_A$I$_S$HRD) comprising of the susceptible, exposed, infected (asymptomatic and symptomatic), hospitalized, recovered and dead population for the {{COVID-19}} pandemic.  We study the properties and dynamics of the proposed model. The conditions under which the disease-free and endemic equilibrium points are asymptotically stable are discussed. Furthermore, we study the sensitivity of the parameters and use the data from Tennessee state (as a case study) to discuss identifiability of the parameters of the model. The non-negative parameters in the model are obtained by solving inverse problems with empirical data from California, Florida, Georgia, Maryland, Tennessee, Texas, Washington and Wisconsin. The basic reproduction number is seen to be slightly above the critical value of one suggesting that stricter measures such as the use of face-masks, social distancing, contact tracing, and even longer stay-at-home orders need to be enforced in order to mitigate the spread of the virus. As stay-at-home orders are rescinded in some of these states, we see that the number of cases began to increase almost immediately and may continue to rise until the end of the year 2020 unless stricter measures are taken.  
\end{abstract}
\begin{keyword}
Time-fractional model, SEIR model, {COVID-19}, Sensitivity analysis, Parameter estimation and  identifiability. 
\end{keyword}
\end{frontmatter}
\section{Introduction}
Fractional differential equations (FDEs) are used to model complex phenomena such as such as the modeling of memory-dependent phenomena ({Di Giuseppe \etal } \cite{DiGiuseppe2010}, Baleanu \emph{et al.} \cite{Baleanu2016}, Podlubny \cite{Podlubny1999}), mechanical properties of materials (Caputo and Mainardi \cite{Caputo1971}), anomalous diffusion in porous media (Fomin \etal \cite{Fomin2011}, Metzler  and Klafter\cite{Metzler2000}), groundwater flow problems (Cloot and Botha \cite{Cloot2006}, Iaffaldano \etal \cite{Iaffaldano2005}),  and control theory (Podlubny \cite{Podlubny1994}), among others. They serve as a generalization of the integer-order differential equations and give more degree of freedom for modeling of biological and physical processes. FDEs have been applied in biological tissues \cite{Magin2010}, DNA sequencing \cite{Machado2011b}, Pine Wilt disease \cite{Shah2020}, lung tissue  mechanics and models \cite{Ionescu2017} , harmonic oscillators \cite{Baleanu2020}, Dengue fever \cite{Diethelm2013}, measles \cite{Islam2014}, human liver \cite{Baleanu2020b}, diffusion processes \cite{Sierociuk2015}, SEIR models \cite{Almeida2018}.
Infectious disease outbreaks  are one of the main causes of deaths in human. Their dynamics and spread are modeled and studied before the introduction of vaccines. {COVID-19}  began in December 2019 in China. Over the last few months, it has spread rapidly leading to over 400,000 deaths across the globe. The first occurrence in the United States was seen around mid  January in Washington \cite{NYT} and has spread across America with over 100,000 deaths and 1.5 million infected. The pandemic has disrupted the day-to-day activities of the human life with over six million jobs lost in the United States. Several actions and measures have been taken by the federal, state and local governments to mitigate the spread of the pandemic. The most prominent measures taken include social distancing, testing, use of face-masks and contact tracing. It is important to model this pandemic in order to better understand the spread and dynamics as well as address the challenges of the pandemic. In short, mathematical models are important to guide the decisions of health  and government officials. \\
The goal of this study is to examine and analyze the spread of the pandemic using a modification of the Susceptible-Exposed-Infected-Recovered (SEIR) model with a time-fractional derivative.  The use of fractional derivatives in the model stems from the fact that the spread of infectious diseases depends not only on the current state but also on its past states (history or memory dependency). Additionally, time-fractional order models reduce errors resulting from neglect of parameters in models. We shall focus on some selected states in the US. We note that models that consider the US as a whole may be misleading and have limited applicability as different states have different {economical and political perspectives which determines the different control strategies used for each state.}  For example, while some states such as Maryland,  New Jersey, New York, Connecticut, among others,  enforced the use of masks in public places and longer stay-at-home order \cite{CNN}, other states  do not enforce these measures thereby allowing for a possibility of increase of infected individuals in such states.  There have been several models for the study of the pandemic. Lu \etal \cite{Lu2020} considered a fractional-order SEIHDR model which incorporates intercity movements. Liu \etal \cite{Liu2020} studied the dynamics of the pandemic by considering asymptomatic and symptomatic infected populations separately. {Giordano \etal \cite{Giordano2020} studied the COVID-19 epidemic with intervention strategies in Italy. They proposed a model consisting of different stages: susceptible, infected, diagnosed, ailing, recognized, threatened, healed and extinct. They further discuss the long time behavior of the populations in which the susceptible, healed and extinct population remains. Stella \etal \cite{Stella2020} studied the role of asymptomatic individuals via complex networks. In particular, they formulated a model which aims at studying the interactions in the population by means of complex networks. They further extended the model to a structured nonhomogenous version by means of the Watts-Strogatz complex network.} Wu \etal \cite{Wu2020} studied domestic and international spread of the pandemic by using different data sets. Zhao and Chen \cite{Zhao2020} discussed the dynamics of the pandemic by considering the Susceptible, unquarantined infected, quarantined infected and Confirmed infected (SUQC) model and parametrize the intervention effect of control measures. Zhang \etal \cite{Zhang2020} considered a fractional SEIR model with different order of the time-fractional derivative for each of the different population being studied. {Tuan \etal \cite{Tuan2020} proposed a fractional-order model using the Caputo derivative for studying the transmission of COVID-19. They discussed the existence and uniqueness of solutions to the proposed model. They further used the generalized Adams-Bashforth-Moulton method for simulating the model.  Bahloul \etal \cite{Bahloul2020} proposed a fractional-order Susceptible-Exposed-Infected-Quarantined- Recovered-Death-Insusceptible (SEIQRDP) model for predicting the spread of COVID-19.} \\
The remainder of the paper is organized as follows: In section 2, we give some preliminary results and definitions from fractional Calculus. Furthermore, we discuss the formulation of the model by considering  seven compartments: Susceptible-Exposed-Asymptomatic infected-Symptomatic infected, Hospitalized, Recovered and Dead populations (SEI$_A$I$_S$HRD). Section 3 details the properties and theoretical analysis of the model. Section 4 discusses the parameter sensitivity and identifiability analysis. In section 5, we applied the model to observed data for some selected states in the US. In particular, this section details solving several inverse problems for parameter estimation and computation of the basic reproduction number. Finally, we give some concluding remarks in Section 6.
\section{Model Setup}
\subsection{Preliminaries}
In this section, we present some preliminary results and definitions in fractional calculus.
\begin{definition}
\cite{Podlubny1999}
The gamma function $\Gamma(\alpha)$ is defined by the integral
\[ \Gamma(\alpha) = \int_{0}^{\infty} e^{-x} x^{\alpha}\,dx\]
which converges in the right half of the complex plane $\mathcal{R}e(z) > 0$.
\end{definition}
\begin{definition}
\cite{Caputo1967}
For any $t>0$, the Caputo-fractional derivative of order $\alpha,~ (n < \alpha \leq n-1)$ of a function $f(t)$ is defined as 
\[ _{t_0}\mathcal{D}_t^\alpha f(t) = \dfrac{1}{\Gamma(n-\alpha)}\int_{t_0}^{t}(t-\tau)^{n-\alpha-1}f^{(n)}(\tau)\,d\tau.\]
\end{definition}
\begin{definition}\cite{Podlubny1999}
The Mittag-Leffler  function which generalizes the exponential function for fractional calculus is defined as
\[ E_{\alpha, \beta}(z) = \sum_{k=0}^{\infty}\dfrac{z^k}{\Gamma(\alpha k + 1)}, ~~~\alpha \in \mathbb{R}^+,~z \in \mathbb{C}.\]
\end{definition}
\begin{remark}
\item More generally, the two parameter Mittag-Leffler function is defined as 
\[ E_{\alpha}(z) = \sum_{k=0}^{\infty}\dfrac{z^k}{\Gamma(\alpha k + \beta)}, ~~~\alpha, \beta \in \mathbb{R}^+,~z \in \mathbb{C}.\]
It has the following properties:
\begin{enumerate}
\item $E_{\alpha, \beta}(z) = zE_{\alpha, \alpha+\beta}(z) + \dfrac{1}{\Gamma(\beta)}$.
\item $_0\mathcal{D}_t^\alpha e^{\lambda t} = t^{-\alpha}E_{1, 1-\alpha}(\lambda t).$
\item $_0\mathcal{D}_t^\alpha E_{\alpha, 1}(\lambda t^\alpha) = \lambda E_{\alpha, 1}(\lambda t^\alpha)$.
\end{enumerate}
\end{remark}
\begin{definition}\cite{Li2009}
A point $x^*$ is said to be an equilibrium point of the system $_{t_0}\mathcal{D}_t^\alpha = f(t, x(t)), ~x(t_0) > 0$ if and only if $f(t, x^*(t)) = 0$.
\end{definition}
\begin{definition}
\cite{Lu2020} An equilibrium point $x^*$ of the system $_{t_0}\mathcal{D}_t^\alpha x(t) = f(t, x(t)),~ x(t_0) > 0$  is said to be asymptotically stable if all {the} eigenvalues of the Jacobian matrix $J = \partial f/\partial x$, evaluated at the equilibrium point, satisfies $|arg(\lambda_i)| > \dfrac{\alpha \pi}{2}$, where $\lambda_i$ are the eigenvalues of $J$.
\end{definition}
\subsection{Model Formulation}
The model discussed in this work is a modification of the SEIR model having three additional compartments. We consider a SEI$_A$I$_S$HRD compartmental model which comprises of the susceptible, exposed, infected (asymptomatic and symptomatic), hospitalized, recovered and dead population. We assume that the natural death and birth rates are the same. We further assume that deaths in the S and R compartments are due to natural deaths and deaths in the  other compartments are as a result of the pandemic. The schema below shows the transmission flow of the model.
\begin{figure}[H]
    \centering
    \includegraphics[width=1.20\textwidth]{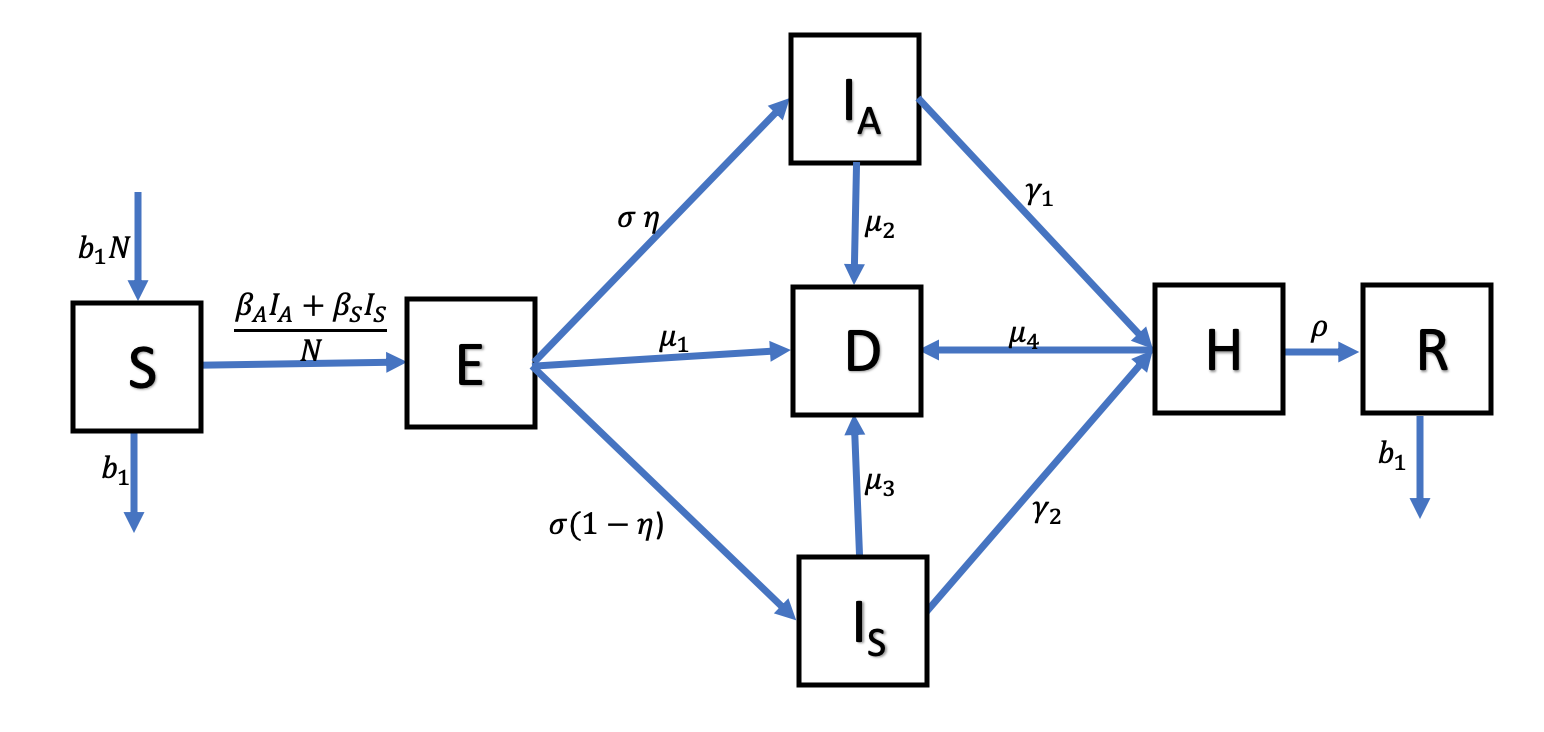}
    \caption{Schematic diagram of the proposed SEI$_A$I$_S$HRD model}
    \label{fig:mschema}
\end{figure}
Thus, the model consists of the following system of ODEs
\begin{equation*}
    \begin{split}
        _0\mathcal{D}^\alpha_{t}S(t) &= b_1 N - \dfrac{S(t)}{N}\left(\beta_A I_A(t) + \beta_S I_{S}(t)\right) - b_1 S(t),\\
_0\mathcal{D}^\alpha_{t}E(t) &= \dfrac{S(t)}{N}\left(\beta_A I_A(t) + \beta_S I_{S}(t)\right) - (\sigma + \mu_1) E(t),\\
_0\mathcal{D}^\alpha_{t}I_A(t) &= \eta\sigma E(t) - (\gamma_1 + \mu_2)I_A(t), \\
_0\mathcal{D}^\alpha_{t}I_S(t) &= (1-\eta)\sigma E(t) - (\gamma_2 + \mu_3) I_S(t),\\
_0\mathcal{D}^\alpha_{t}H(t) &= \gamma_1 I_A(t) + \gamma_2 I_S(t)  - (\rho + \mu_4) H(t),\\
_0\mathcal{D}^\alpha_{t}R(t) &= \rho H(t) - b_1 R(t),\\
_0\mathcal{D}^\alpha_{t}D(t) &= \mu_1 E(t) +  \mu_2 I_A(t)  + \mu_3 I_S(t) +  \mu_4 H(t).
    \end{split}
\end{equation*}
The system has the associated initial data
\begin{eqnarray*}
S(0) = S_0 \geq 0, & E(0) = E_0 \geq 0, & I_A(0) = I_{A0} \geq 0, \\
I_S(0) = I_{S0} \geq 0, & H(0)  = H_0 \geq 0, & R(0) = R_0 \geq 0,\\
D(0) = D_0 \geq 0.
\end{eqnarray*}
The total population is N which is further divided into $S(t), ~E(t)~, I_A(t),~I_S(t)$, $H(t)$ and $R(t)$, $b_1$ is the natural birth rate, $\beta_A$ and $\beta_S$ are the transmission rates to the susceptible population from the asymptomatic and symptomatic populations, respectively. $1/\sigma$ is the incubation period of an exposed individual, $\eta$ denotes the fraction of the exposed population that becomes asymptomatic after the incubation period and the remaining $(1-\eta)$ of the population are symptomatic.  $\gamma_1$ and $\gamma_2$ are the infectious rates of an asymptomatic and a symptomatic individual, respectively. $\rho$ is the recovery rate through hospitalization and,  $\mu_1,~\mu_2,~ \mu_3$ and $\mu_4$ are the mortality rates of the exposed, asymptomatic, symptomatic and hospitalized populations, respectively. We note that the death population in the model comprises of deaths during exposure $\mu_1E(t)$, infectious period ($\mu_2I_A$ and $\mu_3I_S$) and hospitalization $\mu_4H$, and assume that deaths due to other natural occurrences are negligible for this populations.\\
We note that the parameters of the model are non-negative and  have dimensions given by  1/time$^\alpha$. This observation was originally noted in Diethelm \cite{Diethelm2013}. To alleviate this difference in dimensions, we replace the parameters (except $\eta$) with a power $\alpha$ of new parameters to obtain the new system of equations
\begin{equation}
\label{eqn:model}
    \begin{split}
        _0\mathcal{D}^\alpha_{t}S(t) &= b_1^\alpha N - \dfrac{S(t)}{N}\left(\beta_A^\alpha I_A(t) + \beta_S^\alpha I_{S}(t)\right) - b_1^\alpha S(t), \\
_0\mathcal{D}^\alpha_{t}E(t) &= \dfrac{S(t)}{N}\left(\beta_A^\alpha I_A(t) + \beta_S^\alpha I_{S}(t)\right) - (\sigma^\alpha + \mu_1^\alpha) E(t),\\
_0\mathcal{D}^\alpha_{t}I_A(t) &= \eta\sigma^\alpha E(t) - (\gamma_1^\alpha + \mu_2^\alpha)I_A(t), \\
_0\mathcal{D}^\alpha_{t}I_S(t) &= (1-\eta)\sigma^\alpha E(t) - (\gamma_2^\alpha + \mu_3^\alpha) I_S(t),\\
_0\mathcal{D}^\alpha_{t}H(t) &= \gamma_1^\alpha I_A(t) + \gamma_2^\alpha I_S(t)  - (\rho^\alpha + \mu_4^\alpha) H(t),\\
_0\mathcal{D}^\alpha_{t}R(t) &= \rho^\alpha H(t) - b_1^\alpha R(t),\\
_0\mathcal{D}^\alpha_{t}D(t) &= \mu_1^\alpha E(t) +  \mu_2^\alpha I_A(t)  + \mu_3^\alpha I_S(t) +  \mu_4^\alpha H(t).
    \end{split}
\end{equation}
\section{Model Analysis}
In this section, we discuss the properties of the model beginning with the existence, uniqueness, non-negativity and boundedness of solutions of the model (\ref{eqn:model}). For simplicity in analysis, we reduce the system (\ref{eqn:model}) to
\begin{equation}
\label{eqn:model2}
    \begin{split}
        _0\mathcal{D}^\alpha_{t}S(t) &= b_1^\alpha N - \dfrac{S(t)}{N}\left(\beta_A^\alpha I_A(t) + \beta_S^\alpha I_{S}(t)\right) - b_1^\alpha S(t), \\
_0\mathcal{D}^\alpha_{t}E(t) &= \dfrac{S(t)}{N}\left(\beta_A^\alpha I_A(t) + \beta_S^\alpha I_{S}(t)\right) - (\sigma^\alpha + \mu_1^\alpha) E(t),\\
_0\mathcal{D}^\alpha_{t}I_A(t) &= \eta\sigma^\alpha E(t) - (\gamma_1^\alpha + \mu_2^\alpha)I_A(t), \\
_0\mathcal{D}^\alpha_{t}I_S(t) &= (1-\eta)\sigma^\alpha E(t) - (\gamma_2^\alpha + \mu_3^\alpha) I_S(t),\\
_0\mathcal{D}^\alpha_{t}H(t) &= \gamma_1^\alpha I_A(t) + \gamma_2^\alpha I_S(t)  - (\rho^\alpha + \mu_4^\alpha) H(t),\\
_0\mathcal{D}^\alpha_{t}R(t) &= \rho^\alpha H(t) - b_1^\alpha R(t),
    \end{split}
\end{equation}
since $D$ is a linear combination of populations in some of the other compartments.
\begin{theorem}
There exist a unique solution to the system (\ref{eqn:model2}) and the solution is non-negative and bounded for any given initial data $(S_0, E_0, I_{A0}, I_{S0}, H_0, R_0) \geq \mathbf{0} \in \mathbb{R}^6$.
\end{theorem}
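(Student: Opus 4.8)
The plan is to handle the three assertions in sequence: first establish local existence and uniqueness, then confine the trajectory to an invariant region by proving positivity and boundedness, and finally bootstrap the a priori bound to obtain a global solution.

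For existence and uniqueness I would recast the initial value problem (\ref{eqn:model2}) in its equivalent Volterra form
\[
 x(t) = x_0 + \frac{1}{\Gamma(\alpha)}\int_0^t (t-\tau)^{\alpha-1} f(\tau, x(\tau))\, d\tau ,
\]
where $x = (S,E,I_A,I_S,H,R)^\top$ and $f$ collects the six right-hand sides. The only nonlinearities are the bilinear incidence terms $\tfrac{S}{N}(\beta_A^\alpha I_A + \beta_S^\alpha I_S)$; every remaining term is affine. Consequently $f$ is continuous and Lipschitz in $x$ on each bounded subset of $\mathbb{R}^6$, with a Lipschitz constant depending only on the chosen bound. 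A contraction-mapping argument on a short interval (or a direct appeal to the Picard--Lindel\"of-type theorem for Caputo systems) then yields a unique local solution.

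For non-negativity I would check that $f$ points inward on every face of the non-negative orthant: setting one coordinate to zero while keeping the others non-negative gives $f_S|_{S=0}=b_1^\alpha N\ge 0$, $f_E|_{E=0}=\tfrac{S}{N}(\beta_A^\alpha I_A+\beta_S^\alpha I_S)\ge 0$, $f_{I_A}|_{I_A=0}=\eta\sigma^\alpha E\ge 0$, $f_{I_S}|_{I_S=0}=(1-\eta)\sigma^\alpha E\ge 0$ (using $0\le\eta\le 1$), $f_H|_{H=0}=\gamma_1^\alpha I_A+\gamma_2^\alpha I_S\ge 0$, and $f_R|_{R=0}=\rho^\alpha H\ge 0$. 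By the generalized mean value theorem for the Caputo derivative (the fractional positivity lemma), these inequalities force any trajectory starting in $\mathbb{R}^6_{\ge 0}$ to stay there. For boundedness I would sum the six equations: the incidence terms cancel in pairs and the $\sigma^\alpha E$, $\gamma_1^\alpha I_A$, $\gamma_2^\alpha I_S$, $\rho^\alpha H$ fluxes telescope, so that for $T := S+E+I_A+I_S+H+R$,
\[
 {}_0\mathcal{D}^\alpha_t T = b_1^\alpha N - b_1^\alpha(S+R) - \mu_1^\alpha E - \mu_2^\alpha I_A - \mu_3^\alpha I_S - \mu_4^\alpha H \le b_1^\alpha N - c\,T,
\]
where $c=\min\{b_1^\alpha,\mu_1^\alpha,\mu_2^\alpha,\mu_3^\alpha,\mu_4^\alpha\}$ and the inequality uses non-negativity of the components. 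Solving the associated linear Caputo equation with the Mittag-Leffler eigenfunction (property 3 of the Remark) and invoking the fractional comparison principle gives $T(t)\le \tfrac{b_1^\alpha N}{c} + \bigl(T(0)-\tfrac{b_1^\alpha N}{c}\bigr)E_{\alpha,1}(-c\,t^\alpha)$, which remains bounded (indeed $\limsup_{t\to\infty}T(t)\le b_1^\alpha N/c$) since $E_{\alpha,1}(-c\,t^\alpha)$ is bounded for $0<\alpha\le 1$. As every component is non-negative and dominated by $T$, each is bounded.

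The main obstacle is the interface between the local and global statements: because the incidence nonlinearity is only locally, not globally, Lipschitz, the contraction argument by itself secures existence merely on a small interval. The resolution is to use the invariant region just constructed — positivity together with the uniform bound on $T$ confines the solution to a fixed compact set on which $f$ is globally Lipschitz, so the standard continuation argument excludes finite-time blow-up and extends the unique local solution to all $t\ge 0$. Additional care is required because the positivity and comparison steps rest on fractional, rather than classical, versions of the mean value theorem and comparison principle, which must be cited correctly for the Caputo derivative.
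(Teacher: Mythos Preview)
Your proposal is correct and follows broadly the same architecture as the paper --- establish a Lipschitz property for existence/uniqueness, then argue non-negativity via the behavior of the vector field on the boundary of the orthant together with the Caputo mean value theorem --- but two pieces of your argument differ noticeably from what the paper does.

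First, for boundedness the paper simply invokes an external result (Lin~\cite{Lin2007}, Remark~3.2) together with the Lipschitz estimate, whereas you construct an explicit a~priori bound by summing the equations, obtaining ${}_0\mathcal{D}^\alpha_t T \le b_1^\alpha N - cT$ for the total population $T$, and closing with the fractional comparison principle and the Mittag--Leffler supersolution. Your route is more self-contained and also makes the eventual bound $\limsup T(t)\le b_1^\alpha N/c$ explicit; the paper's version is shorter but opaque. Second, you address the local-to-global continuation issue directly (local Lipschitz gives only a local solution; the invariant compact region upgrades this to global), a point the paper leaves implicit in its citation.

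For non-negativity the two arguments are essentially the same idea packaged differently: the paper first shows each coordinate axis is invariant by solving the reduced scalar equation explicitly in terms of $E_{\alpha,1}$, then runs a contradiction argument with the fractional mean value theorem on a general face; you instead check the inward-pointing condition $f_i|_{x_i=0}\ge 0$ on each face at once and appeal to the positivity lemma. Both are standard; yours is somewhat cleaner. One small caveat: your bound requires $c=\min\{b_1^\alpha,\mu_1^\alpha,\ldots,\mu_4^\alpha\}>0$, so you are implicitly using strict positivity of all removal rates --- worth stating, since some of the fitted values in the paper are numerically zero.
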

\begin{proof}
See Appendix A.
\end{proof}
\subsection{Stability Analysis}
\subsubsection{Computation of the basic reproduction number $R_0$}
We shall use the next generation matrix  originally proposed by Diekmann \etal \cite{Diekmann1990} and further elaborated on by van den Driesche and Watmough \cite{VanDriessche2002} and Diekmann \etal \cite{Diekmann2009} to determine $R_0$. {The disease-free equilibrium point (DFE) is $\left(N, 0, 0, 0, 0,0 \right)$}. Now, consider the four compartments $Y = (Y_1, Y_2, Y_3, Y_4) = (E, I_A, I_S, H)$ containing the infected individuals and let $Y^*$ be the DFE point. Since the DFE point exists and is stable (shown in the next section) in the absence of any disease, then the linearized equation at the DFE is
\[ _0\mathcal{D}^\alpha_{t} Y_i = \mathcal{F}_i(Y) - \mathcal{V}_i(Y),~ i=1(1)4, \]
where $\mathcal{F}_i(Y)$ is the rate of appearance of new infections in compartment $i$ and $\mathcal{V}_i(Y)$ is the rate of transfer of infections to and from compartment $i$. We further define 
\[F = \dfrac{\partial \mathcal{F}(Y)}{\partial Y_j}\big|_{Y=Y^*}~~\text{and}~~ V = \dfrac{\partial \mathcal{V}_i(Y)}{\partial Y_j}\big|_{Y=Y^*},~~~i,~j = 1(1)4. \]
Then $\rho(FV^{-1})$ is the basic reproduction number $R_0$, where $\rho(x)$ is the spectral radius of $x$ and $FV^{-1}$ is the next generation matrix. For the system (\ref{eqn:model2}),
\begin{equation*}
F = \left[
    \begin{array}{cccc}
     0 &  \beta_A^\alpha &  \beta_S^\alpha & 0\\
     0 & 0 & 0 & 0 \\
     0 & 0 & 0 & 0 \\
     0 & 0 & 0 & 0 
\end{array}
\right], ~~~
V = \left[
    \begin{array}{cccc}
     \sigma^\alpha + \mu_1^\alpha  & 0 & 0 &0  \\
     -\eta \sigma^\alpha & \gamma_1^\alpha + \mu_2^\alpha & 0 & 0\\
     - (1- \eta)\sigma^\alpha & 0 & \gamma_2^\alpha + \mu_3^\alpha & 0\\
     0 & -\gamma_1^\alpha & - \gamma_2^\alpha & (\rho^\alpha + \mu_4^\alpha)
\end{array}
\right]
\end{equation*}
and the basic reproduction number is given as
\begin{equation}
\label{eqn:Ro}
R_0 = \dfrac{\sigma^\alpha \left[\eta \beta_A^\alpha(\gamma_2^\alpha + \mu_3^\alpha) + (1 - \eta)\beta_S^\alpha(\gamma_1^\alpha + \mu_2^\alpha) \right]}{(\sigma^\alpha + \mu_1^\alpha)(\gamma_1^\alpha + \mu_2^\alpha)(\gamma_2^\alpha + \mu_3^\alpha)}.
\end{equation}
\begin{lemma}
The fractional system (\ref{eqn:model2}) has at most two equilibrium points
\begin{enumerate}
    \item a disease-free equilibrium point  DFE =  $\left(N, 0, 0, 0, 0,0 \right)$
    \item an endemic equilibrium point EE = $(S^*, E^*, I_A^*, I_S^*, H^*, R^*)$,
\end{enumerate}
where 
\begin{equation*}
\begin{split}
S^* &= \dfrac{S_0}{R_0}, \\
E^* &= \dfrac{b_1^\alpha S_0}{R_0(\sigma^\alpha + \mu_1^\alpha)}\left(1 - \dfrac{1}{R_0} \right),\\
I_A^* &= \dfrac{\eta b_1^\alpha\sigma^\alpha S_0}{R_0(\sigma^\alpha + \mu_1^\alpha)(\gamma_1^\alpha + \mu_2^\alpha)}\left(1 - \dfrac{1}{R_0} \right),\\
I_S^* &= \dfrac{(1- \eta)b_1^\alpha\sigma^\alpha S_0}{R_0(\sigma^\alpha + \mu_1^\alpha)(\gamma_2^\alpha + \mu_3^\alpha)}\left(1 - \dfrac{1}{R_0} \right),\\
 H^* &= \dfrac{b_1^\alpha\sigma^\alpha S_0}{R_0(\sigma^\alpha + \mu_1^\alpha)(\rho^\alpha + \mu_4^\alpha)}\left(1 - \dfrac{1}{R_0} \right)\left(\dfrac{\eta \gamma_1^\alpha}{\gamma_1^\alpha} + \dfrac{(1-\eta)\gamma_2^\alpha}{\gamma_2^\alpha + \mu_3^\alpha} \right),\\
 R^* &=  \dfrac{\rho^\alpha b_1^\alpha\sigma^\alpha S_0}{R_0(\sigma^\alpha + \mu_1^\alpha)(\rho^\alpha + \mu_4^\alpha)}\left(1 - \dfrac{1}{R_0} \right)\left(\dfrac{\eta \gamma_1^\alpha}{\gamma_1^\alpha} + \dfrac{(1-\eta)\gamma_2^\alpha}{\gamma_2^\alpha + \mu_3^\alpha} \right).
 \end{split}
\end{equation*}
\end{lemma}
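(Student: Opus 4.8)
The plan is to compute the equilibria straight from the definition of an equilibrium point: a point is an equilibrium of (\ref{eqn:model2}) exactly when all six right-hand sides vanish simultaneously. So I would set each of the six expressions equal to zero and solve the resulting algebraic system, showing that its solution set consists of precisely the two families claimed. Since four of the six equations are linear, the strategy is to reduce everything to a single scalar condition on $E$ and $S$.

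First I would use the last four equations to express the ``downstream'' compartments in terms of $E$. The third and fourth equations give $I_A=\eta\sigma^\alpha E/(\gamma_1^\alpha+\mu_2^\alpha)$ and $I_S=(1-\eta)\sigma^\alpha E/(\gamma_2^\alpha+\mu_3^\alpha)$; the fifth then yields $H$ as an explicit multiple of $E$, and the sixth gives $R=\rho^\alpha H/b_1^\alpha$. Hence every coordinate of an equilibrium is pinned down once $S$ and $E$ are known. Next I would add the first two equations so that the nonlinear incidence term $\tfrac{S}{N}(\beta_A^\alpha I_A+\beta_S^\alpha I_S)$ cancels, leaving the clean relation $b_1^\alpha(N-S)=(\sigma^\alpha+\mu_1^\alpha)E$, i.e. $E=b_1^\alpha(N-S)/(\sigma^\alpha+\mu_1^\alpha)$.

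The key step is the second equation. Substituting the expressions for $I_A,I_S$ and factoring out $E$ reduces it to
\[ E\left[(\sigma^\alpha+\mu_1^\alpha) - \frac{S}{N}\,\sigma^\alpha\left(\frac{\eta\beta_A^\alpha}{\gamma_1^\alpha+\mu_2^\alpha} + \frac{(1-\eta)\beta_S^\alpha}{\gamma_2^\alpha+\mu_3^\alpha}\right)\right] = 0. \]
The crux is to recognize, from the formula (\ref{eqn:Ro}), that the coefficient of $S/N$ equals $(\sigma^\alpha+\mu_1^\alpha)R_0$, so the bracket factors as $(\sigma^\alpha+\mu_1^\alpha)\left(1-\tfrac{S}{N}R_0\right)$ and vanishes exactly when $S=N/R_0$. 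This dichotomy, either $E=0$ or $S=N/R_0$, is the heart of the argument and is what forces at most two equilibria.

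Finally I would dispose of the two cases. If $E=0$, the linear relations force $I_A=I_S=H=R=0$, and the first equation gives $S=N$, producing the DFE $(N,0,0,0,0,0)$. If $E\neq 0$, then $S^*=N/R_0=S_0/R_0$, and back-substituting into $E=b_1^\alpha(N-S^*)/(\sigma^\alpha+\mu_1^\alpha)$ and then into the formulas for $I_A,I_S,H,R$ produces the stated endemic coordinates; note that the two families coincide precisely when $R_0=1$, which is why the conclusion is phrased as ``at most two.'' I expect the only real obstacle to be the algebraic bookkeeping in matching the back-substituted constants to the listed closed forms, and I would check those expressions carefully, since a couple of the displayed entries (the power of $R_0$ in $E^*$ and the leading denominator inside $H^*,R^*$) appear to contain minor transcription slips that the direct substitution will correct; the structural part of the proof is exactly the factorization above.
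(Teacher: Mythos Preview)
The paper states this lemma without proof, so there is no argument in the paper to compare against; your plan is precisely the standard direct computation one would expect here and it is correct. Setting the six right-hand sides to zero, using equations three through six to write $I_A,I_S,H,R$ as explicit multiples of $E$, adding the first two equations to kill the incidence term and obtain $E=b_1^\alpha(N-S)/(\sigma^\alpha+\mu_1^\alpha)$, and then recognizing via (\ref{eqn:Ro}) that the second equation factors as $E\,(\sigma^\alpha+\mu_1^\alpha)\bigl(\tfrac{S}{N}R_0-1\bigr)=0$ is exactly the right route, and the resulting dichotomy $E=0$ or $S=N/R_0$ is what limits the equilibria to two.

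Your suspicion about transcription slips in the displayed endemic coordinates is also well founded. Carrying out the back-substitution you describe gives $E^*=\dfrac{b_1^\alpha S_0}{\sigma^\alpha+\mu_1^\alpha}\Bigl(1-\dfrac{1}{R_0}\Bigr)$, with no extra factor of $R_0$ in the denominator, and this missing factor then propagates to $I_A^*,I_S^*,H^*,R^*$; likewise the first term in the bracket for $H^*$ and $R^*$ should read $\eta\gamma_1^\alpha/(\gamma_1^\alpha+\mu_2^\alpha)$ rather than $\eta\gamma_1^\alpha/\gamma_1^\alpha$. None of this affects the structure of the argument, which stands as you outlined it.
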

\begin{theorem}
The DFE point is locally asymptotically stable if $R_0 < 1$ and $K < 1$, where 
\[K = \dfrac{\sigma^\alpha \left[\eta \beta_A^\alpha(\sigma^\alpha + \gamma_1^\alpha + \mu_1^\alpha + \mu_2^\alpha) + (1 - \eta)\beta_S^\alpha(\sigma^\alpha + \gamma_2^\alpha + \mu_1^\alpha + \mu_3^\alpha) \right]}{\left(\sigma^\alpha + \gamma_1^\alpha + \mu_1^\alpha + \mu_2^\alpha\right)\left(\sigma^\alpha + \gamma_2^\alpha + \mu_1^\alpha + \mu_3^\alpha\right)\left(\gamma_1^\alpha + \gamma_2^\alpha + \mu_2^\alpha + \mu_2^\alpha\right)}.\]
\end{theorem}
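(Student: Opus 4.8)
The plan is to verify the hypotheses of the fractional stability criterion in Definition 2.5: it suffices to show that every eigenvalue $\lambda$ of the Jacobian $J$ of the reduced system \eqref{eqn:model2}, evaluated at the DFE $(N,0,0,0,0,0)$, satisfies $|\arg(\lambda)| > \alpha\pi/2$. Since $0 < \alpha \le 1$, any $\lambda$ with negative real part already obeys $|\arg(\lambda)| \ge \pi/2 \ge \alpha\pi/2$, so it is enough to prove that all eigenvalues of $J$ have negative real parts. This reduces the fractional stability question to a classical Routh--Hurwitz problem for a low-order polynomial.

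First I would write out $J$ at the DFE. Setting $I_A=I_S=0$ and $S=N$, the $S$-row and the $R$-row decouple: the first column of $J$ contains only the diagonal entry $-b_1^\alpha$ and the last column only $-b_1^\alpha$, so $\lambda=-b_1^\alpha$ is an eigenvalue of multiplicity two, both copies real and negative. Likewise the $H$-equation contributes, by block triangularity, the eigenvalue $\lambda=-(\rho^\alpha+\mu_4^\alpha)<0$. These three eigenvalues trivially meet the requirement, and the remaining spectrum is that of the $3\times 3$ block $A$ governing $(E,I_A,I_S)$,
\[
A = \begin{pmatrix} -a_1 & \beta_A^\alpha & \beta_S^\alpha \\ \eta\sigma^\alpha & -a_2 & 0 \\ (1-\eta)\sigma^\alpha & 0 & -a_3 \end{pmatrix}, \qquad a_1=\sigma^\alpha+\mu_1^\alpha,\ a_2=\gamma_1^\alpha+\mu_2^\alpha,\ a_3=\gamma_2^\alpha+\mu_3^\alpha.
\]

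Next I would expand $\det(A-\lambda I)$ along the first row to obtain the characteristic polynomial $\lambda^3+c_2\lambda^2+c_1\lambda+c_0$, with $c_2=a_1+a_2+a_3$, $c_1=a_1a_2+a_1a_3+a_2a_3-\eta\sigma^\alpha\beta_A^\alpha-(1-\eta)\sigma^\alpha\beta_S^\alpha$, and $c_0=a_1a_2a_3-\eta\sigma^\alpha\beta_A^\alpha a_3-(1-\eta)\sigma^\alpha\beta_S^\alpha a_2$. Comparing $c_0$ with \eqref{eqn:Ro} yields the clean identity $c_0=a_1a_2a_3(1-R_0)$, so $c_0>0$ is exactly the hypothesis $R_0<1$; the coefficient $c_2$ is manifestly positive since all parameters are non-negative. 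By the Routh--Hurwitz test for a cubic, the only remaining condition needed to place all three roots in the open left half-plane is $c_2c_1-c_0>0$.

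The crux of the argument, and the step I expect to demand the most care, is identifying $K<1$ with this last inequality. I would expand $c_2c_1-c_0$, collect terms, and verify that the purely symmetric part collapses to a triple product, giving
\[
c_2c_1-c_0=(a_1+a_2)(a_1+a_3)(a_2+a_3)-\eta\sigma^\alpha\beta_A^\alpha(a_1+a_2)-(1-\eta)\sigma^\alpha\beta_S^\alpha(a_1+a_3).
\]
Recognizing $a_1+a_2=\sigma^\alpha+\gamma_1^\alpha+\mu_1^\alpha+\mu_2^\alpha$ and $a_1+a_3=\sigma^\alpha+\gamma_2^\alpha+\mu_1^\alpha+\mu_3^\alpha$, this factors as $c_2c_1-c_0=(a_1+a_2)(a_1+a_3)(a_2+a_3)(1-K)$, so that $c_2c_1-c_0>0\iff K<1$. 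With $c_2>0$, $c_0>0$ (from $R_0<1$) and $c_2c_1-c_0>0$ (from $K<1$) all secured, Routh--Hurwitz forces the three roots of $A$ into the open left half-plane; together with the three negative real eigenvalues found above, every eigenvalue of $J$ satisfies $|\arg(\lambda)|>\alpha\pi/2$, establishing local asymptotic stability. I note in passing that the factorization requires the third denominator factor in the definition of $K$ to be $a_2+a_3=\gamma_1^\alpha+\gamma_2^\alpha+\mu_2^\alpha+\mu_3^\alpha$, so the repeated $\mu_2^\alpha$ appearing in the statement is a typographical slip that I would correct.
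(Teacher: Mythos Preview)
Your proof is correct and follows essentially the same approach as the paper: compute the Jacobian at the DFE, split off the three trivially negative eigenvalues $-b_1^\alpha$ (twice) and $-(\rho^\alpha+\mu_4^\alpha)$, and apply the Routh--Hurwitz criterion to the cubic governing the $(E,I_A,I_S)$ block. You actually go further than the paper by exhibiting the explicit factorization $c_2c_1-c_0=(a_1+a_2)(a_1+a_3)(a_2+a_3)(1-K)$ (the paper merely asserts ``it is easy to show that $AB>C$ if $K<1$''), and your observation that the last denominator factor of $K$ should read $\gamma_1^\alpha+\gamma_2^\alpha+\mu_2^\alpha+\mu_3^\alpha$ rather than the printed $\gamma_1^\alpha+\gamma_2^\alpha+\mu_2^\alpha+\mu_2^\alpha$ is correct.
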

\begin{proof}
See Appendix B.
\end{proof}

\begin{theorem}
The EE point is locally asymptotically stable if $R_0 > 1$ and $(A_1A_2 -A_3)A_3 \geq A_1^2A_4$, where
\begin{equation*}
    \begin{split}
    A_1 &= b_1^\alpha(2R_0 - 1) + R_0(\sigma^\alpha + \gamma_1^\alpha + \gamma_2^\alpha + \mu_1^\alpha + \mu_2^\alpha + \mu_3^\alpha),\\
        A_2 &=  -\sigma^\alpha R_0(\eta \beta_A^\alpha + (1-\eta)\beta_S^\alpha) + b_1^\alpha R_0(2R_0 - 1)(\sigma^\alpha + \gamma_1^\alpha + \gamma_2^\alpha + \mu_1^\alpha + \mu_2^\alpha + \mu_3^\alpha)\\
        & ~~+ R_0^2[(\gamma_2^\alpha + \mu_3^\alpha)(\sigma^\alpha + \mu_1^\alpha) + (\gamma_2^\alpha + \mu_3)(\gamma_1^\alpha + \mu_2^\alpha) + (\sigma^\alpha + \mu_1^\alpha)(\gamma_1^\alpha + \mu_2^\alpha)],\\
        A_3 &= b_1^{2\alpha}R_0^2(2R_0 - 1)[(\gamma_2^\alpha + \mu_3^\alpha)(\sigma^\alpha + \mu_1^\alpha) + (\gamma_2^\alpha + \mu_3)(\gamma_1^\alpha + \mu_2^\alpha) + (\sigma^\alpha + \mu_1^\alpha)(\gamma_1^\alpha + \mu_2^\alpha)]\\ 
        & ~~+ R_0(\sigma^\alpha + \mu_1^\alpha)(\gamma_1^\alpha + \mu_2^\alpha)(\gamma_2^\alpha + \mu_3) - \sigma^\alpha((1-\eta)(\gamma_1^\alpha + \mu_2^\alpha)\beta_S^\alpha + \eta(\gamma_2^\alpha + \mu_3^\alpha)\beta_A^\alpha)),\\
A_4 &= b_1^{3\alpha}R_0^3(2R_0 -1)(\sigma^\alpha + \mu_1^\alpha)(\gamma_1^\alpha + \mu_2^\alpha)(\gamma_2^\alpha + \mu_3^\alpha) - \sigma^\alpha((1-\eta)\beta_S^\alpha(\gamma_1^\alpha + \mu_2^\alpha)\\
& ~~+ \eta\beta_A^\alpha(\gamma_2^\alpha + \mu_3^\alpha))]
    \end{split}
\end{equation*}
\end{theorem}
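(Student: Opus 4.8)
The plan is to linearize the reduced system (\ref{eqn:model2}) about the endemic equilibrium EE and apply the fractional stability criterion recorded earlier, namely that EE is asymptotically stable once every eigenvalue $\lambda$ of the Jacobian $J = \partial f/\partial x$ at EE satisfies $|\arg(\lambda)| > \alpha\pi/2$. Since $0 < \alpha \leq 1$ gives $\alpha\pi/2 \leq \pi/2$, it suffices to prove that every such eigenvalue has negative real part: a root with negative real part has $|\arg(\lambda)| > \pi/2 \geq \alpha\pi/2$ automatically. The fractional stability question therefore collapses to an ordinary Routh--Hurwitz problem for the characteristic polynomial of $J$ at EE.

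First I would assemble $J$ at EE with the variables ordered as $(S, E, I_A, I_S, H, R)$. The compartments $H$ and $R$ are purely downstream --- neither appears on the right-hand side of the $S, E, I_A, I_S$ equations --- so $J$ is block lower-triangular and its characteristic polynomial factors as
\[
\det(\lambda I - J) = \big(\lambda + \rho^\alpha + \mu_4^\alpha\big)\big(\lambda + b_1^\alpha\big)\,\det(\lambda I_4 - J_{11}),
\]
where $J_{11}$ is the $4\times 4$ block governing $(S, E, I_A, I_S)$. The two explicit eigenvalues $-(\rho^\alpha + \mu_4^\alpha)$ and $-b_1^\alpha$ are real and negative, hence automatically admissible, so the whole question rests on the quartic $\det(\lambda I_4 - J_{11})$.

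Next I would evaluate the entries of $J_{11}$ at EE, using the coordinates from the preceding lemma together with the identity $\beta_A^\alpha I_A^* + \beta_S^\alpha I_S^* = b_1^\alpha S_0(1 - 1/R_0)$ (which follows from the definition of $R_0$ and the EE formulas) to express everything through $R_0$. Expanding the determinant then yields a monic quartic $q(\lambda)$, and a short calculation shows that
\[
\lambda^4 + A_1\lambda^3 + A_2\lambda^2 + A_3\lambda + A_4 = R_0^4\, q(\lambda/R_0),
\]
so the $A_i$ are exactly the coefficients produced by the root rescaling $\lambda \mapsto R_0\lambda$. Because $R_0 > 0$, this rescaling multiplies every root by a positive number and therefore preserves the sign of its real part; hence $q$ is stable if and only if the quartic with coefficients $A_1, A_2, A_3, A_4$ is. I would then invoke the Routh--Hurwitz criterion for a quartic, under which all roots lie in the open left half-plane precisely when $A_1 > 0$, $A_1A_2 - A_3 > 0$, $(A_1A_2 - A_3)A_3 - A_1^2 A_4 > 0$, and $A_4 > 0$. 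The hypothesis of the theorem, $(A_1A_2 - A_3)A_3 \geq A_1^2 A_4$, is exactly this third --- and binding --- inequality.

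I expect the main obstacle to be bookkeeping rather than conceptual. The delicate points are (i) collapsing the many EE-dependent entries of $J_{11}$ into the compact coefficients $A_1,\dots,A_4$ without error, and (ii) confirming that $R_0 > 1$ secures the remaining Routh--Hurwitz sign conditions $A_1 > 0$, $A_4 > 0$, and $A_1A_2 > A_3$. These should follow from $R_0 > 1$, which forces $2R_0 - 1 > 0$, $1 - 1/R_0 > 0$, and positivity of every EE coordinate, but verifying that the positive contributions dominate the subtracted terms in $A_3$ and $A_4$ is where care is required. Once the coefficients are matched and these auxiliary signs are in hand, the theorem follows by reading the stated inequality as the decisive Routh--Hurwitz determinant and recalling $\alpha\pi/2 \leq \pi/2$.
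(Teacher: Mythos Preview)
Your plan is essentially the paper's own argument: compute the Jacobian at EE, split off the two trivial eigenvalues $-b_1^\alpha$ and $-(\rho^\alpha+\mu_4^\alpha)$ coming from the downstream $H$ and $R$ compartments, and then apply the Routh--Hurwitz criterion to the remaining quartic, with $R_0>1$ used to secure $A_1>0$ and $A_4>0$ while the stated inequality $(A_1A_2-A_3)A_3\geq A_1^2A_4$ is taken as the binding hypothesis. Your rescaling observation $\lambda\mapsto R_0\lambda$ is a tidy bookkeeping device that explains the $R_0$-powers appearing in the $A_i$, but it does not change the route; the paper simply writes down the quartic with those coefficients directly.
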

\begin{proof}
See Appendix C.
\end{proof}
\section{Parameter Sensitivity and Identifiability Analysis}
We discuss the sensitivity and identifiability of the parameters with respect to the proposed model. 
\subsection{Sensitivity analysis}
The sensitivity analysis (SA) deals with the significance or importance of the parameters in the model. In particular, it finds the most influential parameters that drives the dynamics of the model. It also describes the extent to which  parameter changes affects the result of the methods or models with the goal of identifying the best set of parameters that describes the process or phenomena in question.\\
There are several SA methods which are broadly classified as local and global methods. In this work, we shall focus on the Morris screening method (local method) and Sobol analysis method (global method).
\subsubsection{Morris Screening Method}
The Morris screening method is a local sensitivity measure that makes use of the first order derivative of an output function $y = f(\theta) = f(\theta_1, \cdots, \theta_p)$ with respect to the input parameter $\theta$. It measures the effect of the output when the input variable is perturbed one at a time around a nominal value. It serves as a first check, in most analysis, in screening parameters for identifiability. The method evaluates elementary effects \cite{Saltelli2000, Saltelli2004, Ye2017} with the $i$th parameter through the forward perturbation
\[ g_i(\theta) = \dfrac{f(\theta_1, \theta_2, \cdots, \theta_i + \Delta \theta_i, \cdots, \theta_p) - f(\theta_i, \cdots, \theta_p)}{\Delta \theta_i}, ~~~ i=1(1)p\]
Morris \cite{Morris1991} proposed two sensitivity measures, the mean ($\mu$) and the standard deviation ($\tilde{\sigma}$) of the elementary effects. For non-monotonic models, $\mu$ may lead to a very small value due to cancellation effects. For this reason, Campolongo \etal \cite{Campolongo2007} proposed the use of absolute values for evaluating the mean. In order to obtain a dimension-free sensitivity, we prefer the use of the sensitivity measure $\delta$ given in Brun \etal \cite{Brun2001} as
\[\delta_i =  \sqrt{\dfrac{1}{N}\sum^{N}_{j=1}\tilde{g}_{_{ij}}^2}, ~~~~~~~i = 1(1)p, ~\text{and}~ j=1(1)N,\]
where N is the number of sample points and 
\[ \tilde{g}_i(\theta) = \dfrac{f(\theta_1, \theta_2, \cdots, \theta_i + \Delta \theta_i, \cdots, \theta_p) - f(\theta_1, \cdots, \theta_p)}{\Delta \theta_i}\dfrac{\theta_i}{f(\theta_1,\cdots, \theta_p)}. \]
A common practice in the literature \cite{Pohlmann2007, Ye2016, Ye2017} is to plot the indices $\delta$ against $\tilde{\sigma}$, the standard deviation. We see from Fig.\,2(a) that the fractional order $\alpha$ has the highest influence on the model output over time. The transmission rates for the symptomatic and asymptomatic populations and, the fractional parameter $\eta$ are the next most influential parameters in the model. This is further corroborated by Fig.\,2(b). The four parameters ($\alpha$ $\beta_S, \beta_A$ and $\eta$) denoted by red squares are the most important parameters, that is they have the largest $\delta$ and $\tilde{\sigma}$ values. The parameters ($b_1, \rho, \mu_1, \mu_2, \mu_3$ and $\mu_4$) represented by the blue squares have the least influence on the model output and can be considered unimportant. The other parameters represented by the green squares have more influence than the parameters represented by the blue squares.\\
One major setback of the Morris screening test for sensitivity analysis is the consideration of each parameter individually and independently of the other parameters. In real applications, this is not true as parameters have collinearity and dependencies on one another.
\begin{figure}[H]
    \centering
    \begin{subfigure}[b]{.8\textwidth}
    \centering
    \includegraphics[width=1.2\textwidth]{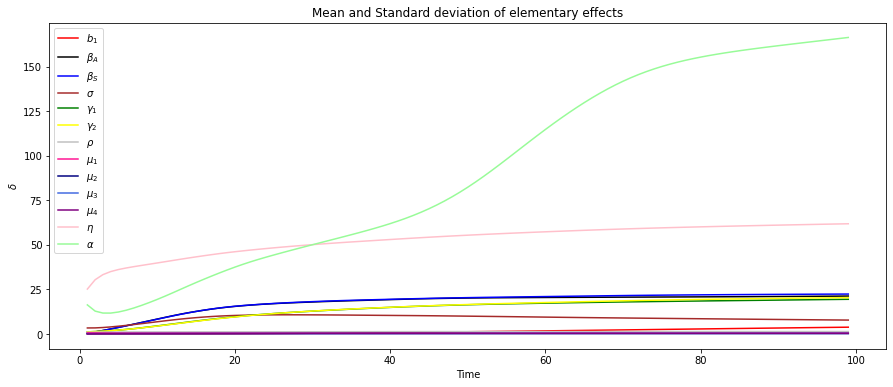} 
    \subcaption{Sensitivity of the parameters over time} 
  \end{subfigure}
  \\
   \begin{subfigure}[b]{.8\textwidth}
    \centering
    \includegraphics[width=1.2\textwidth]{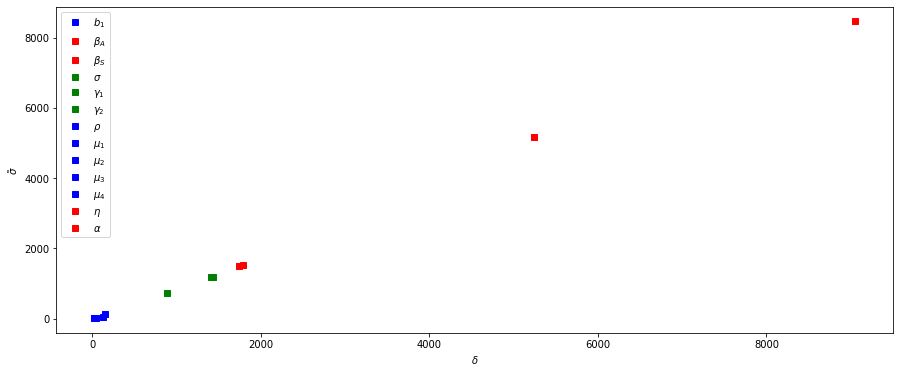}
    \subcaption{Parameter Importance}
  \end{subfigure}
  \caption{Morris screening test }
\end{figure}
\subsubsection{Sobol Analysis}
The Sobol method is a variance-based sensitivity analysis method which unlike the Morris screening method takes into account the effect of the relationship between each parameters of the model. It uses the decomposition of variance to calculate Sobol's sensitivity indices: first and total order sensitivity measures. The basic idea of the Sobol's method is the decomposition of the model output  function $y = f(\theta_1, \cdots, \theta_p)$ into summands of increasing dimensionality, that is
\[ V(y) = V_{1, \cdots, p} + \sum_{i=1}^{p}V_i + \sum^{p}_{i=1}\sum_{j>1}^{p}V_{i, j} + \cdots\]
where $V_i$ is the partial variance of the contribution of the parameter $\theta_i$ and $V_{i, \cdots, s}$ is the partial variances caused by the interaction of the parameters $(\theta_1, \cdots, \theta_s)$ for $s \leq p$.\\
The first order sensitivity index measures the main effect of parameter $\theta_i$ on the model output; that is the partial contribution of $\theta_i$ to the variance $V(y)$.The index \cite{Sobol1993, Sobol2001} is defined as
\[ S_i = \dfrac{V_i}{V(y)}.\]
The larger this index, the more sensitive the parameter is to the model output \cite{Sobol1993, Sobol2001}. Using the law of total variances \cite{Sobol2001, Ye2017}, the index can also be expressed as 
\[ V(y) = V_{\theta_i}(E_{\theta_{\sim i}}(y|\theta_i)) + E_{\theta_i}(V_{\theta_{\sim i}}(y|\theta_i))\]
and 
\[ S_i = \dfrac{ V_{\theta_i}(E_{\theta_{\sim i}}(y|\theta_i))}{V(y)}\]
where $ V_{\theta_i}(E_{\theta_{\sim i}}(y|\theta_i))$ is the partial variance caused by $\theta_i$ and $E_{\theta_{\sim i}}(y|\theta_i)$ is the mean of the model output calculated by using all the values of the other parameters $\theta_{\sim i}$ (except $\theta_i$) and $V(y)$ is the total variance. \\
The total sensitivity indices \cite{Homma1996} measures the effects of parameter $\theta_i$ and the interaction with the other parameters. It is defined as
\[ S_{T_i} = \dfrac{V_i + V_{i,j} + \cdots + V_{i, j, \cdots, p}}{V(y)}.\]
The total variance, $V(y)$, for this index is given as 
\[V(y) = V_{\theta_{\sim i}}(E_{\theta_{ i}}(y|\theta_{\sim i})) + E_{\theta_{\sim i}}(V_{\theta_{i}}(y|\theta_{\sim i})) \]
and 
\begin{equation*}
    \begin{split}
        S_{T_i}  &= \dfrac{E_{\theta_{\sim i}}(V_{\theta_{i}}(y|\theta_{\sim i}))}{V(y)}\\
        & = \dfrac{V(y) - V_{\theta_{\sim i}}(E_{\theta_{ i}}(y|\theta_{\sim i}))}{V(y)}.
    \end{split}
\end{equation*}
The mean and the variance can be evaluated using quasirandom sampling method \cite{Saltelli2010, Ye2017} and are given as 
\[ V_{\theta_{i}}(E_{\theta_{ \sim i}}(y|\theta_{i})) = \dfrac{1}{N}\sum_{j=1}^{N}f(\mathbf{B}_j)\left( f(\mathbf{A}^i_{\mathbf{B}, j}) - f(\mathbf{A}_j)\right),\]
and
\[ E_{\theta_{\sim i}}(V_{\theta_{i}}(y|\theta_{\sim i})) = \dfrac{1}{2N}\sum_{j=1}^{N}\left(f(\mathbf{A}_j) - f(\mathbf{A}^i_{\mathbf{B}, j}) \right)^2,\]
where $\mathbf{A}$ and $\mathbf{B}$ are two independent parameter sample matrices of dimensions $N \times p$. We shall use the python SALib package \cite{SALib} to compute the first and total order variance indices. Fig.\,(3) shows that the fractional order $\beta_S$ has the highest interaction with the other parameters. These results are consistent with the results in the Morris screening test as  important parameters of the test  show high interaction with the other parameters.
\begin{figure}[H]
    \centering
    \includegraphics[width=1.0\textwidth]{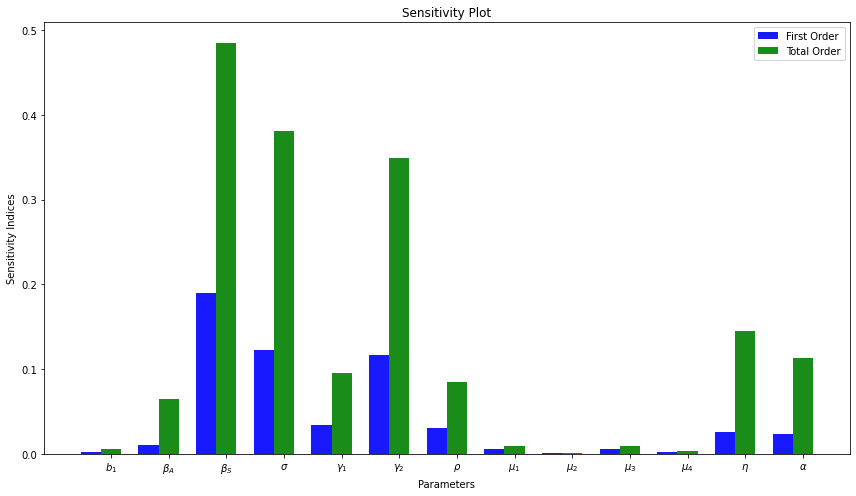}
    \caption{Sobol Sensitivity Indices}
    \label{fig:sobol}
\end{figure}
\subsection{Parameter Identifiability}
The concept of identifiability is dependent on sensitivity. It entails the selection of the subset of parameters of a model having little or no collinearity and uncertainty, and which can be identified uniquely from a given set of observed data or measurements. In other words, it answers the question "Can the available data be described by the model and the selected subset of parameters?". There are several techniques or tests for parameter identifiability. Most of the tests are based on the Fisher information matrix (FIM) $F = \chi^T\chi$  where $\chi = \partial y /\partial \theta$ for a model output function $y$. Cobelli and Di Stefano \cite{Cobelli1980} showed that a sufficient condition for identifiability is the non-singularity of FIM. Burth \etal \cite{Burth1999} proposed an iterative estimation process which implements a reduced-order estimation by finding parameters whose axis lie closest to the direction of FIM. The associated parameter values are then fixed at prior estimates during the iterated process. Brun \etal \cite{Brun2001} studied parameter identifiability using two indices; a parameter importance ranking index $\delta$ and a collinearity index $\gamma_{_K}$ which depends on the smallest eigenvalues of submatrices of $\chi^T\chi$ corresponding to the parameter subset $K$. Cintr\'on-Arias \etal \cite{Cintron-Arias2009} explained the need for a good parameter subset for identifiability to satisfy the full rank test. They further introduced two indices; the selection score and the condition number of $\chi^T\chi$. The smaller these indices the lesser the collinearity and uncertainty in the parameter values of the subset. Finally, they used the coefficient of variation index to examine the effect of parameters in the parameter subset.  In this work, we shall use the test proposed by Cintr\'on-Arias \etal in identifying the parameters. The algorithm can be summarized in the following steps

 \begin{algorithm}[H]
\caption{Algorithm for Parameter subset Selection \cite{Cintron-Arias2009}}
\label{myAlgorithm3}
\begin{algorithmic}[1]
\State Perform a combinatorial search for all possible parameter subsets. Let
\[S_p = \{\theta = (\lambda_1, \lambda_2, \cdots, \lambda_p) \in \mathbb{R}^p \big| \lambda_k \in \mathcal{I}~\text{and}~ \lambda_k \neq \lambda_m ~\forall~k, m = 1, \cdots, p\},\]
where $\mathcal{I} = \{b_1, \beta_A, \beta_S, \sigma, \gamma_1, \gamma_2, \rho, \mu_1, \mu_2, \mu_3, \mu_4, \eta, \alpha \}$.
\State Select parameter subsets that pass the full rank test; that is 
\[\Theta_p = \{\theta \big| \theta \in S_p \subset \mathbb{R}^p,~ \text{Rank}(\chi(\theta)) = p \}. \]
\State For each $\theta \in \Theta_p$, calculate the parameter selection score $\zeta(\theta) = |\vartheta(\theta)|$  where 
\[\vartheta = \dfrac{\sqrt{\Sigma (\theta)_{ii}}}{\theta_i}, ~~~i = 1, \cdots, p, \]
and $\Sigma(\theta) = \sigma_0^2 \left[ \chi^T(\theta)\chi(\theta)\right]^{-1} \in \mathbb{R}^p$.
\State Calculate the condition number $\kappa(\chi(\theta))$ for each parameter subset $\theta \in \Theta_p$.
The smaller the values of $\kappa(\chi(\theta))$
 and $\vartheta(\theta)$, the lower the uncertainty possibilities in the estimate.
\end{algorithmic}
\end{algorithm}
To discuss the results in this section, we shall use the state of Tennessee as a case study to understand parameter identifiability. Furthermore, we used the following values obtained using a random search algorithm  as the nominal parameter set $\theta_0$ for the model:
\[
    b_1 = 8.0316\text{e-07},  ~~ \beta_A = 1.0000\text{e-10}, ~~ \beta_S = 1.2312\text{e+00},~~\sigma = 9.9618\text{e-01},\]
    \[\gamma_1 = 2.6628\text{e-02}, ~~ \gamma_2 = 1.0000\text{e+00}, ~~ \rho = 6.1740\text{e-03},~~ \mu_1 = 3.3491\text{e-03},\]
 \[ \mu_2 = 1.0000\text{e-10},~~\mu_3 = 3.0859\text{e-04},~~\mu_4 = 2.1222\text{e-05},~~\eta = 1.5683\text{e-01}, \]
 \[\alpha = 1.0000\text{e+00}
\]
and the nominal error variance $\sigma_0 = 10$. We further divide the parameters into three groups according to their importance rankings discussed in the previous section:
\begin{equation*}
    \begin{split}
        S_1 &= (\beta_A, \beta_S, \eta, \alpha),\\
        S_2 &= (\sigma, \gamma_1, \gamma_2),\\
        S_3 &= (b_1, \rho, \mu_1, \mu_2, \mu_3, \mu_4),
    \end{split}
\end{equation*}
where $S_1$ and $S_3$ are the most and least influential parameter sets, respectively, while $S_2$ contains more influential parameters than $S_3$. We display some selections of the parameter subsets of size $p$ in Table \ref{tab:my_label1} where we have chosen the subsets with the smallest score values. The entries in Table \ref{tab:my_label1} are ordered with respect to the selection score $\vartheta(\theta)$ for each subset of same cardinality. A high selection score and condition number for a parameter subset indicates substantial collinearity and linear dependence, and thus is poorly identifiable even if the parameter subsets contains $S_1$, that is contains the set of most influential parameters. We observe that most of the selections in Table \ref{tab:my_label1} contains at least one element in each of the groups listed above. This shows that while parameter importance ranking is crucial in identifying parameters that drives the dynamics of a model, it does not have substantial effect in identifiability. Identifiability depends on proper selection of subsets including parameters in each of the three  groups above that describes the measurement or data.

To have an idea of the variations of the condition number and the selection score, we give a plot of these values for $p=2$ in fig.\,4 (with logarithmic scales). Good parameter combination in fig.\,4 corresponds to values in the lower left corner of the figure where the values, $\vartheta(\theta)$ and $\kappa(\chi(\theta))$, are relatively small.
 \begin{figure}{H}
     \centering
     \includegraphics[width=1.0\textwidth]{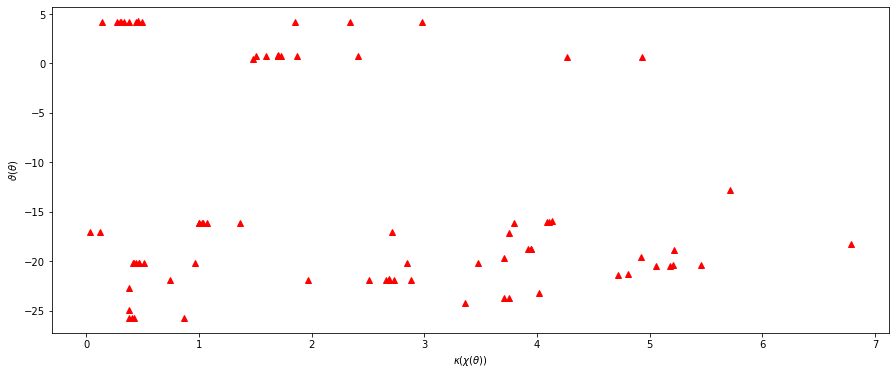}
     \caption{The condition number $\kappa(\chi(\theta))$ against the parameter selection scores $\vartheta(\theta)$ of the $N\times p$ sensitivity matrices for all parameter subsets $\theta = \Theta_p$ with $p=2$. Logarithmic scales are used on both axis }
     \label{fig:my_label4}
 \end{figure}
 \begin{table}[H]
    \centering
    \begin{tabular}{|c|c|c|c|}
    \hline
       $p$ & Parameter Subsets & $\kappa(\chi(\theta))$ 
         & $\vartheta(\theta)$\\
         \hline 
          13 &  $(b_1,\beta_A, \beta_S, \sigma, \gamma_1, \gamma_2, \rho, \mu_1, \mu_2, \mu_3, \mu_4, \eta, \alpha)$ & 9.530e+03 & 8.481e+01\\
         \hline
         12 & $(b_1,\beta_A, \beta_S, \sigma, \gamma_1, \gamma_2, \rho, \mu_1, \mu_3, \mu_4, \eta, \alpha)$ & 9.513e+03 & 6.147e+00\\
         \hline
         11 & $(b_1, \beta_S, \sigma, \gamma_1, \gamma_2, \rho, \mu_1, \mu_3,  \mu_4, \eta, \alpha)$ & 4.814e+03 & 2.479e-04\\
         \hline
          \multirow{2}{*}{10} & $(b_1, \beta_S, \sigma, \gamma_1, \gamma_2,  \rho, \mu_1, \mu_4, \eta, \alpha)$ & 3.696e+03 & 3.852e-07\\
          \cline{2-4}
           & $(b_1, \beta_S,  \sigma, \gamma_1, \gamma_2, \rho, \mu_3, \mu_4, \eta, \alpha)$ & 3.685e+03 & 4.143e-06\\
         \hline
         \multirow{2}{*}{9} & ($\beta_S, \sigma, \gamma_1, \gamma_2,  \rho, \mu_1, \mu_4, \eta, \alpha)$ & 3.411e+03 & 2.766e-07\\
          \cline{2-4}
          & $(b_1, \beta_S, \sigma, \gamma_1, \gamma_2, \rho, \mu_4, \eta, \alpha)$ & 3.311e+03 & 2.824e-07\\
          \hline
          \multirow{3}{*}{7} & $(\beta_S, \gamma_1, \gamma_2, \rho, \mu_4, \eta,  \alpha)$ & 2.718e+03 & 5.344e-08\\
          \cline{2-4}
          & $(\beta_S, \gamma_1, \gamma_2, \rho, \mu_1, \eta,  \alpha)$ & 2.947e+03 & 5.388e-08\\
          \hline
          \multirow{3}{*}{5} & $(\gamma_1,  \gamma_2, \rho, \mu_4,  \alpha)$ & 6.171e+01 & 2.006e-09\\
          \cline{2-4}
          & $(\beta_S, \gamma_1,  \rho, \mu_4, \alpha)$ & 6.389e+01 & 2.006e-09\\
          \cline{2-4}
          & $(\gamma_1,  \rho, \mu_4,  \eta, \alpha)$ & 6.989e+01 & 2.010e-09\\
          \hline
          \multirow{3}{*}{4} & $(\gamma_1, \rho, \eta,  \alpha)$ & 6.258e+01 & 3.458e-10\\
          \cline{2-4}
          & $(\gamma_1, \gamma_2, \rho, \alpha)$ & 5.289e+01 & 3.491e-10\\
          \cline{2-4}
          & $(\beta_S, \gamma_1, \rho, \alpha)$  & 5.349e+01 & 3.496e-10\\
          \hline
          \multirow{3}{*}{3} & $(\gamma_2, \rho,  \alpha)$ & 5.140e+01 & 5.431e-11\\
          \cline{2-4}
          & $(\beta_S, \rho, \alpha)$ & 5.203e+01 & 5.850e-11\\
          \cline{2-4}
          & $(\rho, \eta, \alpha)$ & 6.224e+01 & 9.109e-11\\
          \hline
          
         \end{tabular}
     \caption{Selection scores and condition numbers for some selected parameter subsets}
     \label{tab:my_label1}
 \end{table}
 To further analyze the parameter identifiability of the model, we consider the parameters subsets:
 \begin{equation*}
     \begin{split}
         \theta_1 &= (b_1, \beta_A, \beta_S, \sigma, \gamma_1, \gamma_2, \rho, \mu_1, \mu_2, \mu_3, \mu_4, \eta, \alpha),\\
         \theta_2 &= (b_1, \beta_S, \sigma, \gamma_1, \gamma_2, \rho, \mu_1, \mu_3, \mu_4, \eta, \alpha),\\
         \theta_3 &= (\beta_S, \sigma, \gamma_1, \gamma_2, \rho, \mu_1, \mu_4, \eta, \alpha),\\
         \theta_4 &=  (\beta_S, \gamma_1,  \gamma_2, \rho, \mu_4, \eta, \alpha),\\
         \theta_5 &= (\gamma_1,  \gamma_2, \rho, \mu_4, \alpha),\\
         \theta_6 &= (\gamma_1,  \gamma_2, \rho, \alpha),\\
         \theta_7 &= (\gamma_2, \rho, \alpha)
     \end{split}
 \end{equation*}
 such that $\theta_{i+1} \subset \theta_{i}, ~i = 1, \cdots, 6$. The choice of these parameter subsets are due to their relative small condition numbers and selection scores. 
In other to create synthetic data, we assume the nominal parameter subsets and error variance (given at the beginning of this section) to be the true parameter vectors and true variance. Furthermore,  we add random noise to the model output as follows:
\[Y_j = z(t_j, \theta_0) + \sigma_0 \,\mathcal{N}(0, 1), ~~j = 1, \cdots, N. \]
We solve seven inverse problems for each of the parameter subsets $\theta_i, ~i=1, \cdots, 7$. We analyze the result using the coefficient of variation and standard error \cite{Cintron-Arias2009} given as
\[ SE_j(\tilde{\theta}) == \sqrt{\tilde{\Sigma}_{j,j}}, ~~~j=1.\cdots, p\]
and
\[ v_j(\tilde{\theta})  = \dfrac{SE_j(\tilde{\theta})}{\theta_j},~~~j=1.\cdots, p,\]
where $\tilde{\Sigma}_{j,j} = \tilde{\sigma_0}^2\left[\chi(\tilde{\theta})^T \chi(\tilde{\theta})\right]^{-1}$ and $\tilde{\sigma_0}^2 = \dfrac{1}{n-p} |Y - z(\tilde{\theta})|$.

\begin{table}[]
    \centering
    \begin{tabular}{|c|c|c|}
    \hline
     $\tilde{\theta}$    & AIC & BIC  \\
    \hline
      $\theta_1$   & 164.80 & 198.67\\
              \hline
      $\theta_2$ & 163.39 & 192.05\\
        \hline
      $\theta_3$ & 150.84 & 174.28\\
              \hline
      $\theta_4$ & 144.85 & 163.09\\
              \hline
      $\theta_5$ & 142.63 & 155.68\\
              \hline
      $\theta_6$ & 140.80 & 151.22\\
              \hline
      $\theta_7$ & 141.68 & 149.50\\
              \hline
    \end{tabular}
    \caption{AIC and BIC metrics to estimate the quality of the model with different parameter sets.}
    \label{tab:my_label2}
\end{table}

From Table \ref{tab:my_label5}, it is seen that the parameters $ (\mu_2, \mu_3) \subset \theta_1$ have standard errors that is approximately $20$ times their estimates. This shows substantial uncertainty in these parameter values and any parameter subset containing these parameters may result in illogical parameter estimation from observations. The standard errors of $\beta_S, \gamma_1, \rho, \mu_3, \mu_4, \eta, \alpha$  in $\theta_2$ show  improvements and implies lower linear dependence and collinearity for parameters in $\theta_2$ than those in $\theta_1$. Thus, a substantial improvement in uncertainty quantification is seen from $\theta_1$ to $\theta_2$. Further improvements are observed for each of the other parameter subsets as more parameters are removed. For instance, with the removal  of $\beta_S$ and $\eta$ in $\theta_4$, it seen that the standard error  for $\gamma_1$ and $\gamma_2$ dropped from 4\%  and 1\% to approximately 0.07\%  and 0.003\%, respectively, of their estimates. Other improvements in $\theta_5$ include $\alpha$  where the standard error is reduced to at least one-tenth of its standard error in $\theta_4$. We note that there is no substantial gain in the removal of $\mu_4$ from $\theta_5$ and $\gamma_1$ in $\theta_6$ as seen in Table \ref{tab:label5}. \\
Parameter identifiability might be misleading without the investigation of the residual of the model \cite{Cintron-Arias2009}. The Akaike Information Criterion (AIC) and Bayesian Information Criterion (BIC) indices make use of residuals to determine the quality of models in the presence of a given set of data. Table  \ref{tab:my_label2} shows the AIC and BIC estimates for each parameter set $\theta_i, ~i=1, \cdots, 7$. It is seen that the best improvements occur from $\theta_i$ to $\theta_{i+1}$ for $i=2, 3$. Thus, the best case scenario of uncertainty quantification obtained for this analysis is that of $\theta_4$.\\
Finally, we present results in Table  \ref{tab:my_label3} obtained from solving the inverse problem for $\theta_4$ using the data given in \cite{TDH}. The remaining parameters are fixed using the nominal parameters above and $T^*$ is the number of days used in the simulation
\begin{table}[th]
    \centering
    \begin{tabular}{|c|c|c|c|}
    \hline
      Parameters   &  Estimates & Standard Errors & Coefficients of Variation\\
      \hline
       $\beta_S$  & 9.9792e-01 & 1.8476e-01 & 1.8514e-01 \\
       \hline
       $\gamma_1$ & 7.2723e-02 & 6.2483e-03 & 8.5918e-02\\
       \hline
       $\gamma_2$ & 2.9536e-01 & 2.2578e-02 &  7.6443e-02\\
       \hline
       $\rho$ & 1.9201e-02 & 4.9837e-04 & 2.5956e-02\\
       \hline
       $\mu_4$ & 2.5413e-04 & 2.7412e-04 & 1.0786e+00 \\
       \hline
       $\eta$ & 6.7561e-01 & 4.2262e-02 & 6.2554e-02\\
       \hline
       $\alpha$ & 9.8887e-01 & 2.9311e-02 & 2.9641e-02\\
       \hline
    \end{tabular}
    \caption{Final parameter estimates from the data given in \cite{TDH} with $T^* = 100$}
    \label{tab:my_label3}
\end{table}
\section{Results and Discussions}
Seven inverse problems for California, Florida, Georgia, Maryland, Texas, Washington and  Wisconsin were solved to estimate the full parameter sets of the model. As seen in fig.\,5, the fits are reasonably good even for states like Tennessee and Wisconsin whose current infected population begins to flatten. Table \ref{tab:my_label6} (see Appendix) shows the fit parameter sets for each of the states with $T^*$ being the number of days used in the simulation. We see that the contact rates $\beta_A^\alpha$ and $\beta_S^\alpha$ for the asymptomatic and symptomatic population lies within 0.5--1.5 days$^{-1}$, a range suggested by Li \etal \cite{Li2020c}, Read \etal \cite{Read2020}, Shen \etal \cite{Shen2020}, Eikenberry \etal \cite{Eikenberry2020}. The incubation period $1/\sigma^\alpha$ is seen to be relatively small, around 1--2 days. The average length of active infection $1/\gamma_1^\alpha$ and $1/\gamma_2^\alpha$ of the asymptomatic and symptomatic population, respectively, is seen to be around 1--27 days which is in line with suggested range of days  in the literature \cite{Tang2020, Zhou2020}. The recovery rate via hospitalization is seen to be very small for California, Florida, Georgia and Washington where the data records little or no recovery at all for infected patients. The death rates $\mu_1^\alpha, \mu_2^\alpha, \mu_3^\alpha, \mu_4^\alpha$ from the exposed, asymptomatic infected, symptomatic infected and hospitalized compartments are seen to be relatively small ranging from 0--0.01 day$^{-1}$. With a $\mu_4^\alpha$ value of around 0, Florida and Georgia has no death from the hospitalized compartments. This is reasonable since $\rho^\alpha$ is also practically 0 indicating no recovery for this states as seen in the data (not shown here). \\
Table \ref{tab:my_label4} shows the basic reproduction number $R_0$ values computed for our model using (\ref{eqn:Ro}) and the parameters  listed in Table \ref{tab:my_label6}. The epidemic is expected to continue indefinitely if $R_0 > 1$ as predicted for all states. This suggest that stricter measures such as the use of masks in public places, social distancing, contact tracing and even longer stay-at-home orders need to be enforced in order to eradicate the epidemic. Fig.\,\ref{fig:sim1} shows plot of the infected (asymptomatic plus symptomatic), recovered and deaths.  We see that shape of most curves are similar except for Wisconsin and Tennessee where the infected population begin to flatten over time. In particular, spikes of the curves in fig.\,\ref{fig:sim1} begin around late March or early April 2020 and continues to rise.\\
{Table \ref{tab:my_label4b} shows a computational comparison of the proposed model and the integer-order counterpart using the sum squared errors (SSE) obtained for the infected, recovered and dead populations. As seen in the table, the fractional-order model shows superiority over the integer-order model in fitting the COVID-19 pandemic.} In fig.\,\ref{fig:sim3}, we show the cumulative infected and hospitalized population. This plot also shows that while the infected population flatten over time for states like California, Florida, Maryland, Tennessee, Texas and Wisconsin, it increases for Georgia and Washington. If drastic measures are not taken, the model suggest that many will be infected, at some point, before the end of 2021 in these states.
\begin{table}
    \centering
    \begin{tabular}{|c|c|}
     \hline
       States  & $R_0$ \\
         \hline
        California & 1.0837 \\
         \hline
        Florida & 1.1340 \\
         \hline
        Georgia & 1.1096 \\
         \hline
        Maryland & 1.2501 \\
         \hline
        Texas& 1.2367 \\
         \hline
        Tennessee & 1.0343 \\
         \hline
        Washington & 1.0964 \\
         \hline
        Wisconsin & 1.0754\\
         \hline
    \end{tabular}
    \caption{The basic reproduction number $R_0$ values for the selected states.\label{tab:my_label4} }
\end{table}

\begin{table}
{
\begin{tabular}{|c|c|c|}
\hline
States  & Integer-order model & Fractional-order model\\
\hline
 California & 8.323e+09 &  2.898e+09\\
 \hline
 Florida & 3.688e+09 & 2.544e+09\\
 \hline
 Georgia & 2.995e+09 & 7.464e+08\\
 \hline
 Maryland & 6.277e+08 & 6.551e+08\\
 \hline
 Tennessee & 3.308e+08 & 1.054e+08\\
 \hline
 Texas & 5.473e+09 & 2.712e+09\\
 \hline
 Washington & 8.780e+08 & 3.103e+08\\
 \hline
 Wisconsin & 7.404e+08 & 3.165e+08\\ 
 \hline
\end{tabular}
}
\caption{ {Comparison of sum squared errors (SSE) of the proposed model to the integer-order counterpart\label{tab:my_label4b}}
}
\end{table}

\begin{figure}
    \centering
    \includegraphics[width=1.2\textwidth]{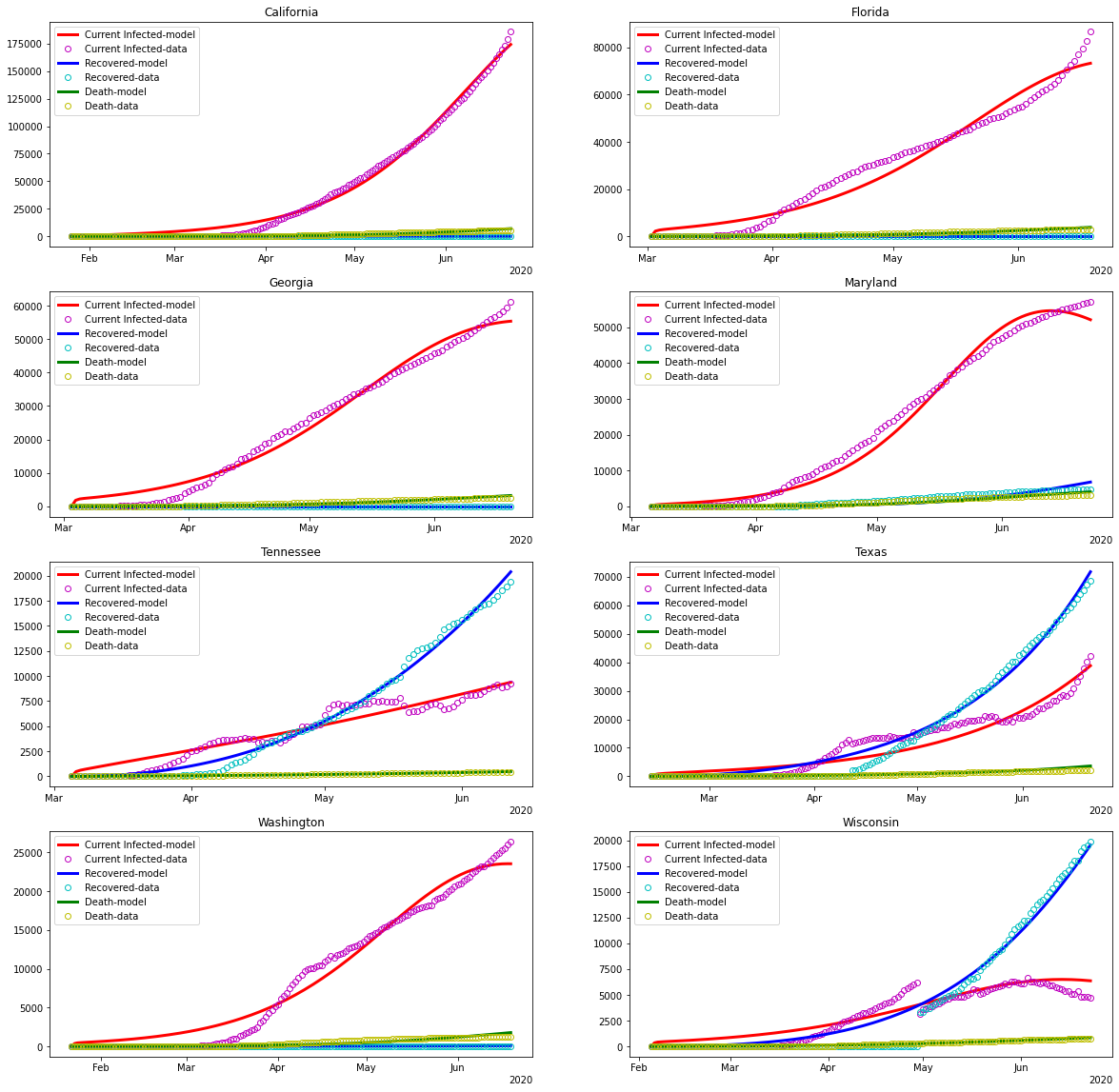}
    \caption{Model fits of the compartmental model to infected, recovered and deaths.}
    \label{fig:sim1}
\end{figure}
\begin{figure}[!htbp]
    \centering
    \includegraphics[width=1.2\textwidth]{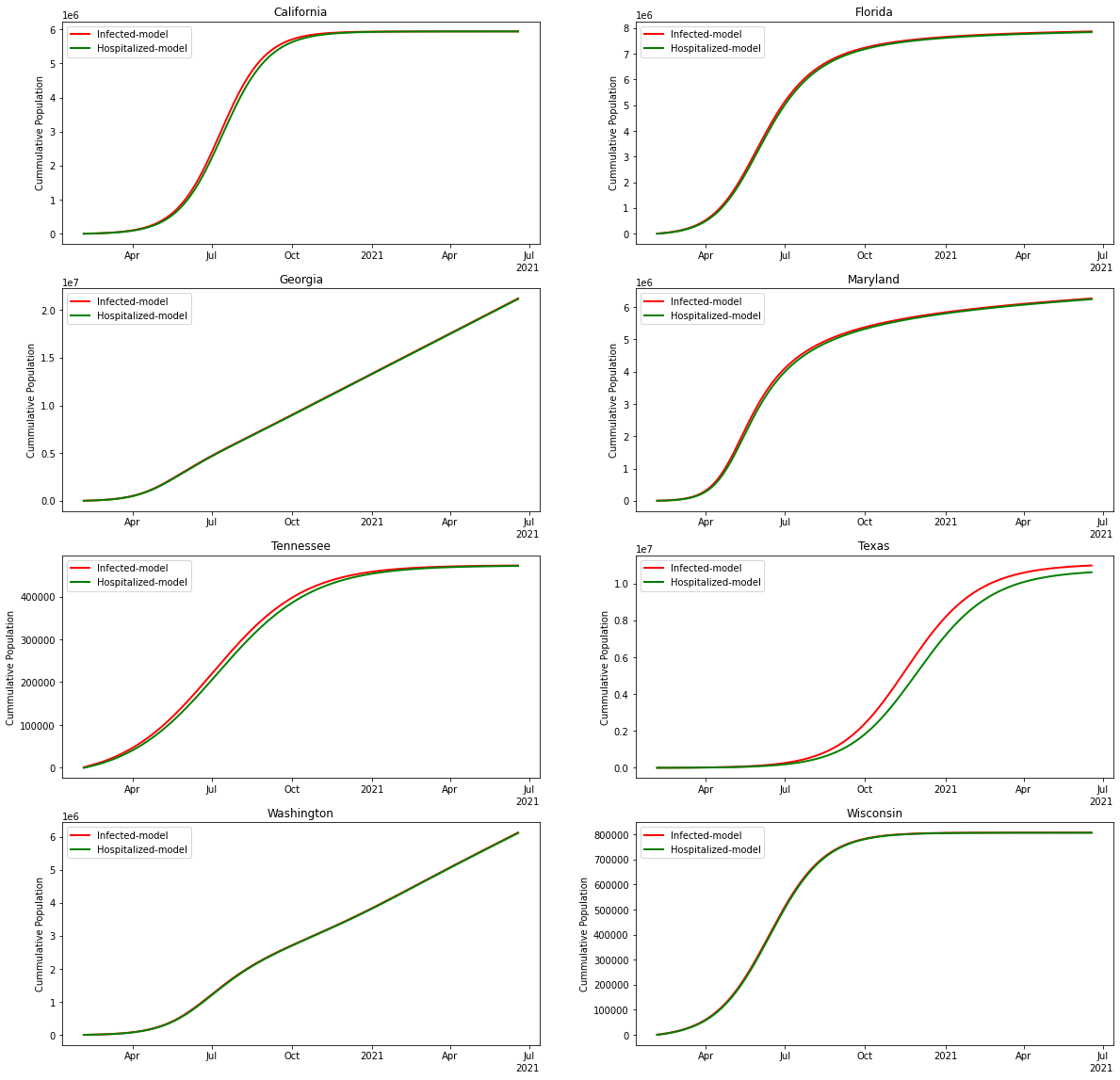}
    \caption{Model prediction for cumulative infected and hospitalized populations}
    \label{fig:sim3}
\end{figure}
\section{Conclusions}
A fractional-order compartmental  model is proposed to study the spread and dynamics of the {COVID-19} pandemic. We studied the properties of the model and discussed the parameters of the model in the  context of sensitivity and identifiability. We solve several inverse problems to estimate the fit parameters of the model using data from John Hopkins University \cite{JHU} for some selected states in the US. The basic reproduction number $R_0$ of the model was computed and shows that the states considered have $R_0$ values slightly greater than the critical value of one. The model suggests that stricter or aggressive measures need be enforced in order to slow the spread of the virus.
{
\section*{Acknowledgement}
The authors are grateful to the anonymous referees for their constructive criticisms and insightful suggestions in making the manuscript a better one. 
}
 \bibliography{analytical}
 \newpage
 \begin{center}
     \textbf{Appendices}
 \end{center}
\section*{Appendix A. Proof of Theorem 3.1.} 
\begin{proof}
By applying \cite[Theorem 3.1]{Lin2007}, we obtain the existence of the solutions. To show the uniqueness and boundedness of solutions, it suffices to show, by \cite[Remark 3.2]{Lin2007} and Rademacher's theorem, that $F = (f_1, f_2, f_3, f_4, f_5, f_6)$ is locally Lipschitz continuous where
\begin{equation*}
    \begin{split}
        f_1 &=  b_1^\alpha N - \dfrac{S}{N}\left(\beta_A^\alpha I_A + \beta_S^\alpha I_{S}\right) - b_1^\alpha S,\\
        f_2 &= \dfrac{S}{N}\left(\beta_A^\alpha I_A + \beta_S^\alpha I_{S}\right) - (\sigma^\alpha + \mu_1^\alpha) E,\\
        f_3 &= \eta\sigma^\alpha E - (\gamma_1^\alpha + \mu_2^\alpha)I_A, \\
        f_4 &= (1-\eta)\sigma^\alpha E - (\gamma_2^\alpha + \mu_3^\alpha) I_S,\\
        f_5 &= \gamma_1^\alpha I_A + \gamma_2^\alpha I_S  - (\rho^\alpha + \mu_4^\alpha) H,\\
        f_6 &= \rho^\alpha H - b_1^\alpha R.
    \end{split}
\end{equation*}
Let $X = (S, E, I_A, I_S, H, R)$, $\tilde{X} = (\tilde{S}, \tilde{E}, \tilde{I}_A, \tilde{I}_S, \tilde{H}, \tilde{R})$ and $||\cdot||$ denote the $L_2$ norm, then
\begin{equation*}
    \begin{split}
        ||F(X) - F(\tilde{X})|| &\leq ||f_1(X) - f_1(\tilde{X})|| +  ||f_2(X) - f_2(\tilde{X})||\\
        &~+ ||f_3(X) - f_3(\tilde{X})|| +
||f_4(X) - f_4(\tilde{X})||\\
&~+
||f_5(X) - f_5(\tilde{X})|| + 
||f_6(X) - f_6(\tilde{X})||\\
&\leq L ||X - \tilde{X}||,
    \end{split}
\end{equation*} 
where L = $\smash{\displaystyle\max_{1 \leq i \leq 6}}L_i$ and $L_1 = b_1^\alpha (N +1)+ \beta^\alpha_A + \beta^\alpha_S$, $L_2 = \beta^\alpha_A + \beta^\alpha_S + \sigma^\alpha + \mu_1^\alpha$, $L_3 = \eta\sigma^\alpha + \gamma_1^\alpha + \mu_2^\alpha$, $L_4 = (1-\eta)\sigma^\alpha + \gamma_2^\alpha + \mu^\alpha_3$, $L_5 = \gamma_1^\alpha + \gamma_2^\alpha + \rho^\alpha + \mu_4^\alpha$ and $L_6 = \rho^\alpha + b_1^\alpha$. Thus, $F$ satisfies the local Lipschitz conditions with respect to $X$ which proves the uniqueness and boundedness of solution to (\ref{eqn:model2}). Next we show the non-negativity of solutions. At first, we consider moving along the $S$-axis, that is $E(0) = I_A(0) = I_S(0) = H(0) = R(0) = 0$ and $0 < S(0) = S_0 \leq N$, then
\[ _0\mathcal{D}^\alpha_{t}S(t) = b_1^\alpha N - b_2^\alpha S\]
whose solution is given as 
\[ S(t) = S_0 E_{\alpha, 1}(-b_1^\alpha t^\alpha) + b_1^\alpha N t^\alpha E_{\alpha, \alpha+1}(-b_1^\alpha t^\alpha) > 0\]
since $b_1^\alpha > 0$ and $t>0$. In a similar manner, moving along each of the other respective axis (that is all initial conditions are zeros except for the axis being considered), it is easy to show that 
\begin{equation*}
    \begin{split}
        E(t) &= E_{\alpha, 1}\left( -(\sigma^\alpha + \mu_1^\alpha)t^\alpha\right)E_0 \geq 0\\
        I_A(t) &= E_{\alpha, 1}\left( -(\gamma_1^\alpha + \mu_2^\alpha)t^\alpha\right)I_{A0} \geq 0\\
        I_S(t) &= E_{\alpha, 1}\left( -(\gamma_2^\alpha + \mu_3^\alpha)t^\alpha\right)I_{S0} \geq 0\\
        H(t) &= E_{\alpha, 1}\left( -(\rho^\alpha + \mu_4^\alpha)t^\alpha\right)H_{0} \geq 0\\
        R(t) &= E_{\alpha, 1}\left( - b_1^\alpha t^\alpha\right)R_0 \geq 0.\\
    \end{split}
\end{equation*}
Therefore, all axis are non-negative invariant. Now, if the solution of the system is positive in the $E-I_A-I_S-H-R$ plane, then let $S(t^*) = 0$, $E(t^*) >0$, $I_A(t^*) >0$, $I_S(t^*) >0$, $H(t^*) >0$ and $R(t^*) >0$ for some $t^*$ such that $S(t) < S(t^*)$. But
\[ _0\mathcal{D}^\alpha_{t}S\left|_{t = t^*}\right. = b_1^\alpha N > 0\]
in this plane. Using the mean value theorem for Caputo-fractional derivative
\[ S(t) - S(t^*) = \dfrac{1}{\Gamma(\alpha)}\mathcal{D}^\alpha_{t}(\tau) (t - t^*)^\alpha\]
for some $\tau \in [t^*, t)$, we see that  $S(t) > S(t^*)$. This contradicts our previous statement. Similar arguments can be used for each of the remaining population variables.
\end{proof}
\section*{Appendix B. Proof of Theorem 3.2}
\begin{proof}
The Jacobian matrix evaluated at the DFE point is given as 
\begin{equation*}
    J = \left[
    \begin{array}{cccccc}
    -b_1^\alpha & 0 & -\beta_A^\alpha & -\beta_S^\alpha & 0 & 0\\
    0 & -(\sigma^\alpha + \mu_1^\alpha) & \beta_A^\alpha & \beta_S^\alpha & 0 &0\\
    0 & \eta \sigma^\alpha & -(\gamma_1^\alpha + \mu_2^\alpha) & 0 & 0 & 0 \\
    0 & (1-\eta)\sigma^\alpha &0 & -(\gamma_2^\alpha + \mu_3^\alpha) & 0 & 0\\
    0 & 0 & \gamma_1^\alpha & \gamma_2^\alpha & -(\rho^\alpha + \mu_4^\alpha) & 0\\
    0 & 0 & 0 & 0 & \rho^\alpha & -b_1^\alpha
    \end{array}
    \right].
\end{equation*}
The eigenvalues of the Jacobian matrix are given as $\lambda_1 = \lambda_2 = -b_1^\alpha$, $\lambda_3 = -(\rho^\alpha + \mu^\alpha_4)$ and the roots of the equation $z^3 + Az^2 + Bz + C$, where \\
$A = (\sigma^\alpha + \gamma_1^\alpha + \gamma_2^\alpha + \mu_1^\alpha + \mu_2^\alpha + \mu_3^\alpha)$,\\
$B = - \sigma^\alpha (\eta\beta_A^\alpha + (1-\eta)\beta_S^\alpha) + (\sigma ^\alpha + \mu_1^\alpha)(\gamma_1^\alpha + \mu_2^\alpha) + (\sigma ^\alpha + \mu_1^\alpha)(\gamma_2^\alpha + \mu_3^\alpha) + (\gamma_1^\alpha + \mu_2^\alpha)(\gamma_2^\alpha + \mu_3^\alpha)$,\\
$C = - \sigma^\alpha(\eta(\gamma_2^\alpha + \mu_3^\alpha)\beta_A^\alpha + (1-\eta)(\gamma_1^\alpha + \mu_2^\alpha)\beta_S^\alpha) + (\sigma^\alpha + \mu_1^\alpha)(\gamma_1^\alpha + \mu_2^\alpha)(\gamma_2^\alpha + \mu_3^\alpha)$. To show stability, we apply the Routh-Hurwitz criterion. Clearly $A> 0$ and  $C>0$ if $R_0 < 1$. It is easy to show that  $AB > C$ if $K < 1$ which completes the proof.
\end{proof}
\section*{Appendix C. Proof of Theorem 3.3}
\begin{proof}
The Jacobian matrix evaluated at the EE point is given as 
\begin{equation*}
    J = \left[
    \begin{array}{cccccc}
    -b_1^\alpha - b_1^\alpha\left(1  - \dfrac{1}{R_0}\right)& 0 & -\dfrac{ \beta_A^\alpha}{R_0} & -\dfrac{ \beta_S^\alpha}{ R_0} & 0 & 0\\
    b_1^\alpha\left(1  - \dfrac{1}{R_0}\right) & -(\sigma^\alpha + \mu_1^\alpha) & \dfrac{\beta_A^\alpha}{R_0} & \dfrac{\beta_S^\alpha}{R_0} & 0 &0\\
    0 & \eta \sigma^\alpha & -(\gamma_1^\alpha + \mu_2^\alpha) & 0 & 0 & 0 \\
    0 & (1-\eta)\sigma^\alpha &0 & -(\gamma_2^\alpha + \mu_3^\alpha) & 0 & 0\\
    0 & 0 & \gamma_1^\alpha & \gamma_2^\alpha & -(\rho^\alpha + \mu_4^\alpha) & 0\\
    0 & 0 & 0 & 0 & \rho^\alpha & -b_1^\alpha
    \end{array}
    \right].
\end{equation*}
The eigenvalues of the Jacobian matrix are $\lambda_1 = -b_1^\alpha, ~\lambda_2 = - (\rho^\alpha + \mu_4^\alpha)$ and the roots of the equation $z^4 + A_1z^3 + A_2z^2 + A_3z +  A_4$. By Applying the Routh-Hurwitz criterion, the EE is stable if  $A_1 > 0$, $A_4 > 0$ and $(A_1A_2 -A_3)A_3 -A_1^2A_4 \geq 0$. Clearly $A_1 > 0$ since $R_0 > 1$. Also, $A_4 > 0$ implies that
$$ b_1^{3\alpha}R_0^3(2R_0 -1)(\sigma^\alpha + \mu_1^\alpha)(\gamma_1^\alpha + \mu_2^\alpha)(\gamma_2^\alpha + \mu_3^\alpha) > \sigma^\alpha((1-\eta)\beta_S^\alpha(\gamma_1^\alpha + \mu_2^\alpha) + \eta\beta_A^\alpha(\gamma_2^\alpha + \mu_3^\alpha))]lpha(\gamma_2^\alpha + \mu_3^\alpha))$$
$\Rightarrow ~~~b_1^{3\alpha} R_0^3(2R_0 - 1)> \dfrac{\sigma^\alpha b_1^\alpha((1-\eta)\beta_S^\alpha(\gamma_1^\alpha + \mu_2^\alpha) + \eta\beta_A^\alpha(\gamma_2^\alpha + \mu_3^\alpha))}{(\sigma^\alpha + \mu_1^\alpha)(\gamma_1^\alpha + \mu_2^\alpha)(\gamma_2^\alpha + \mu_3^\alpha)}$\\
$\Rightarrow ~~~b_1^{3\alpha} R_0^2(2R_0 - 1) > 1$ which is true since $R_0 > 1$
\end{proof}
\section*{Appendix D. Data Sets and Methods}
Data for cumulative confirmed cases, recovered and deaths were obtained from the Tennessee Department of Health (TDH) and Johns Hopkins University (JHU) Center for Systems Science and Engineering which are made available to the public on TDH's website \cite{TDH} and Github \cite{JHU}, respectively. The raw files are converted into Panda data frames and stored for ease of access.  For the files obtained from \cite{JHU}, the data are sorted according to counties, so we aggregate the cases for each of the recorded counties to obtain the total number of cases in each day for the states. For the total number of population for each states, we used the data obtained from the US Census Bureau \cite{USCB}.\\
All numerical simulations were done in python using our numerical scheme \cite{Biala2018} from  which we obtain the solution of the proposed model at each time step as
\begin{enumerate}
 \item Predictor:\\ 
    \begin{equation*}
    \begin{split}
        S_p &= S_{j} + \dfrac{\tau^\alpha}{\Gamma(1 + \alpha)}F_1(t_j, S_j, E_j, I_{A, j}, I_{S, j}, {H}_j, R_j, D_j) + \tilde{H}_{1, j}\\
         E_p &= E_{j} + \dfrac{\tau^\alpha}{\Gamma(1 + \alpha)}F_2(t_j, S_j, E_j, I_{A, j}, I_{S, j}, {H}_j, R_j, D_j) + \tilde{H}_{2,j}\\
          I_{A,p} &= I_{A, j} + \dfrac{\tau^\alpha}{\Gamma(1 + \alpha)}F_3(t_j, S_j, E_j, I_{A, j}, I_{S, j}, H_j, R_j, D_j) + \tilde{H}_{3,j}\\
          I_{S,p} &= I_{S, j} + \dfrac{\tau^\alpha}{\Gamma(1 + \alpha)}F_4(t_j, S_j, E_j, I_{A, j}, I_{S, j}, H_j, R_j, D_j) + \tilde{H}_{4,j}\\
           H_p &= H_{j} + \dfrac{\tau^\alpha}{\Gamma(1 + \alpha)}F_5(t_j, S_j, E_j, I_{A, j}, I_{S, j}, H_j, R_j, D_j) + \tilde{H}_{5,j}\\
            R_p &= R_{j} + \dfrac{\tau^\alpha}{\Gamma(1 + \alpha)}F_6(t_j, S_j, E_j, I_{A, j}, I_{S, j}, H_j, R_j, D_j) + \tilde{H}_{6,j}\\
             D_p &= D_{j} + \dfrac{\tau^\alpha}{\Gamma(1 + \alpha)}F_7(t_j, S_j, E_j, I_{A, j}, I_{S, j}, H_j, R_j, D_j) + \tilde{H}_{7,j}\\
    \end{split}
\end{equation*}
    \item Corrector:\\ 
    \begin{equation*}
    \begin{split}
        S_{j+1} &= S_{j} + \dfrac{\tau^\alpha}{\Gamma(2 + \alpha)}\Big(\alpha\, F_1(t_j, S_j, E_j, I_{A, j}, I_{S, j}, H_j, R_j, D_j) \\
        & ~~~~~~~+ ~F_1(t_{j+1}, S_p, E_p, I_{A, p}, I_{S, p}, H_p, R_p, D_p)\Big) + \tilde{H}_{1, j},\\
        E_{j+1} &= E_{j} + \dfrac{\tau^\alpha}{\Gamma(2 + \alpha)}\Big(\alpha\, F_2(t_j, S_j, E_j, I_{A, j}, I_{S, j}, H_j, R_j, D_j) \\
        & ~~~~~~~+ ~F_2(t_{j+1}, S_p, E_p, I_{A, p}, I_{S, p}, H_p, R_p, D_p)\Big) + \tilde{H}_{2, j},\\
        I_{A, j+1} &= I_{A, j} + \dfrac{\tau^\alpha}{\Gamma(2 + \alpha)}\Big(\alpha\, F_3(t_j, S_j, E_j, I_{A, j}, I_{S, j}, H_j, R_j, D_j) \\
        & ~~~~~~~+ ~F_3(t_{j+1}, S_p, E_p, I_{A, p}, I_{S, p}, H_p, R_p, D_p)\Big) + \tilde{H}_{3, j},\\
        I_{S,j+1} &= I_{S,j} + \dfrac{\tau^\alpha}{\Gamma(2 + \alpha)}\Big(\alpha\, F_4(t_j, S_j, E_j, I_{A, j}, I_{S, j}, H_j, R_j, D_j) \\
        & ~~~~~~~+ ~F_4(t_{j+1}, S_p, E_p, I_{A, p}, I_{S, p}, H_p, R_p, D_p)\Big) + \tilde{H}_{4, j},\\
        H_{j+1} &= H_{j} + \dfrac{\tau^\alpha}{\Gamma(2 + \alpha)}\Big(\alpha\, F_5(t_j, S_j, E_j, I_{A, j}, I_{S, j}, H_j, R_j, D_j) \\
        & ~~~~~~~+ ~F_5(t_{j+1}, S_p, E_p, I_{A, p}, I_{S, p}, H_p, R_p, D_p)\Big) + \tilde{H}_{5, j},\\
        R_{j+1} &= R_{j} + \dfrac{\tau^\alpha}{\Gamma(2 + \alpha)}\Big(\alpha\, F_6(t_j, S_j, E_j, I_{A, j}, I_{S, j}, H_j, R_j, D_j) \\
        & ~~~~~~~+ ~F_6(t_{j+1}, S_p, E_p, I_{A, p}, I_{S, p}, H_p, R_p, D_p)\Big) + \tilde{H}_{6, j},\\
        D_{j+1} &= D_{j} + \dfrac{\tau^\alpha}{\Gamma(2 + \alpha)}\Big(\alpha\, F_7(t_j, S_j, E_j, I_{A, j}, I_{S, j}, H_j, R_j, D_j) \\
        & ~~~~~~~+ ~F_7(t_{j+1}, S_p, E_p, I_{A, p}, I_{S, p}, H_p, R_p, D_p)\Big) + \tilde{H}_{7, j},\\
    \end{split}
\end{equation*}
where 
\begin{equation*}
    \begin{split}
        F_1(t_j, S_j, E_j, I_{A, j}, I_{S, j}, H_j, R_j, D_j) &=  b_1^\alpha N - \dfrac{S_j}{N}\left(\beta_A^\alpha I_{A,j} + \beta_S^\alpha I_{S, j}\right) - b_1^\alpha S_j,\\
        F_2(t_j, S_j, E_j, I_{A, j}, I_{S, j}, H_j, R_j, D_j) &= \dfrac{S_j}{N}\left(\beta_A^\alpha I_{A,j} + \beta_S^\alpha I_{S, j}\right) - (\sigma^\alpha + \mu_1^\alpha) E_j,\\
       F_3(t_j, S_j, E_j, I_{A, j}, I_{S, j}, H_j, R_j, D_j) &= \eta\sigma^\alpha E_j - (\gamma_1^\alpha + \mu_2^\alpha)I_{A,j}, \\
        F_4(t_j, S_j, E_j, I_{A, j}, I_{S, j}, H_j, R_j, D_j) &= (1-\eta)\sigma^\alpha E_j - (\gamma_2^\alpha + \mu_3^\alpha) I_{S,j},\\
        F_5(t_j, S_j, E_j, I_{A, j}, I_{S, j}, H_j, R_j, D_j) &= \gamma_1^\alpha I_{A,j} + \gamma_2^\alpha I_{S,j}  - (\rho^\alpha + \mu_4^\alpha) H_j,\\
        F_6(t_j, S_j, E_j, I_{A, j}, I_{S, j}, H_j, R_j, D_j) &= \rho^\alpha H_j - b_1^\alpha R_j,\\
        F_7(t_j, S_j, E_j, I_{A, j}, I_{S, j}, H_j, R_j, D_j) &= \mu_1^\alpha E_j + \mu_2^\alpha I_{A,j} + \mu_3^\alpha I_{S,j} + \mu_4^\alpha H_{j},
    \end{split}
\end{equation*}
and 
\begin{equation*}
    \begin{split}
        \tilde{H}_{1, j} = \dfrac{\tau^\alpha}{\Gamma(2 + \alpha)}\sum_{l=0}^{j}a_{_{l, j}}\,F_1(t_l, S_l, E_l, I_{A, l}, I_{S, l}, H_l, R_l, D_l),\\
        \tilde{H}_{2, j} = \dfrac{\tau^\alpha}{\Gamma(2 + \alpha)}\sum_{l=0}^{j}a_{_{l, j}}\,F_2(t_l, S_l, E_l, I_{A, l}, I_{S, l}, H_l, R_l, D_l),\\
        \tilde{H}_{3, j} = \dfrac{\tau^\alpha}{\Gamma(2 + \alpha)}\sum_{l=0}^{j}a_{_{l, j}}\,F_3(t_l, S_l, E_l, I_{A, l}, I_{S, l}, H_l, R_l, D_l),\\
        \tilde{H}_{4, j} = \dfrac{\tau^\alpha}{\Gamma(2 + \alpha)}\sum_{l=0}^{j}a_{_{l, j}}\,F_4(t_l, S_l, E_l, I_{A, l}, I_{S, l}, H_l, R_l, D_l),\\
        \tilde{H}_{5, j} = \dfrac{\tau^\alpha}{\Gamma(2 + \alpha)}\sum_{l=0}^{j}a_{_{l, j}}\,F_5(t_l, S_l, E_l, I_{A, l}, I_{S, l}, H_l, R_l, D_l),\\
        \tilde{H}_{6, j} = \dfrac{\tau^\alpha}{\Gamma(2 + \alpha)}\sum_{l=0}^{j}a_{_{l, j}}\,F_6(t_l, S_l, E_l, I_{A, l}, I_{S, l}, H_l, R_l, D_l),\\
        \tilde{H}_{7, j} = \dfrac{\tau^\alpha}{\Gamma(2 + \alpha)}\sum_{l=0}^{j}a_{_{l, j}}\,F_7(t_l, S_l, E_l, I_{A, l}, I_{S, l}, H_l, R_l, D_l)
    \end{split}
\end{equation*}
are the memory terms of the respective population variables and 
\[ a_{_{l,j}} =  \dfrac{\tau^\gamma }{\Gamma(\gamma  + 2)}
\begin{cases}
-(j- \gamma )(j+1)^\gamma  + j^\gamma (2j - \gamma  -1 ) - (j-1)^{\gamma  +1 }, \qquad \qquad \qquad  \qquad l = 0,\\
(j-l+2)^{\gamma +1} - 3(j-l+1)^{\gamma +1} + 3(j-l)^{\gamma  +1 }- (j-l-1)^{\gamma  +1 }, \quad 1 \leq l \leq j-1,\\
2^{\gamma  +1 } - \gamma   - 3, \qquad \qquad \qquad  \qquad  \qquad \qquad \qquad \qquad  \qquad  \qquad \qquad  \qquad l = j.
\end{cases}
\]
The fitted parameters were obtained by using python's \texttt{scipy.optimize.minimize} routine with the limited memory BFGS method. One main benefit of the routine is the use of bounds for fit parameters. This allows faster convergence of the algorithm and ensures obtaining meaningful fit parameters of the model. The parameters are constrained to lie between 0 and 1 except for the transmission rates that are allowed to lie between 0.5 and 1.5 as suggested in \cite{Li2020c, Read2020, Shen2020, Eikenberry2020}. In some cases like Florida and Georgia where the data shows little or no recovery of the infected population, we constrain the fit parameter to lie within 0 and 1e-10. Fits were made comparing the simulation results  and the data obtained from \cite{JHU} for the infected, recovered and deaths.
\end{enumerate}
\section*{Appendix E. Tables}
\begin{landscape}
\begin{table}[H]
    \centering
    \begin{tabular}{c|c|c|c|c|c|c|c|c|c|c|c|c|c}
         & $b_1$ & $\beta_A$ & $\beta_S$ & $\sigma$ & $\gamma_1$ & $\gamma_2$ & $\rho$ & $\mu_1$ & $\mu_2$ & $\mu_3$  & $\mu_4$ & $\eta$ & $\alpha$\\
        \hline
        $\tilde{\theta}$ &1.13e-4 & 2.60e-4 & 1.23e+0 & 9.96e-1 & 2.69e-2 & 1.00e+0 & 6.18e-3 & 3.32e-3 & 1.59e-5 & 3.05e-4 & 2.13e-5 & 1.57e-1 & 1.00e+0\\
        $E$ & 3.16e-6 & 1.42e-4 & 5.74e-3 & 2.29e-3 & 7.39e-4 & 4.56e-3 & 1.01e-5 & 4.94e-3 & 7.37e-4 & 6.48e-3 & 5.52e-5 & 9.47e-4 & 9.35e-4\\
        $V$ & 2.81e-2 & 5.49e-1 & 4.66e-3 & 2.30e-3 & 2.75e-2 & 4.56e-3 & 1.64e-3 & 1.49e+0 & 4.64e+1 & 2.12e+1 & 2.59e+0 & 6.04e-3 & 9.35e-4\\
        \hline
        $\tilde{\theta}$ &3.52e-4 & & 1.23e+0 & 9.96e-1 & 2.74e-2 & 1.00e+0 & 6.17e-3 & 3.31e-3 &  & 2.26e-4 & 2.42e-5 & 1.57e-1 & 1.00e+0\\
        $E$ & 3.09e-6 &  & 1.46e-3 & 1.82e-3 & 1.37e-4 & 1.19e-3 & 4.11e-6 & 1.43e-3 & & 1.75e-3 & 7.48e-6 & 4.59e-4 & 3.53e-4\\
        $V$ & 8.78e-3 &  & 1.19e-3 & 1.83e-3 & 4.99e-3  & 1.19e-3 & 6.67e-4 & 4.31e-1 &  & 7.75e+0 & 3.09e-1 & 2.94e-3 & 3.53e-4\\
        \hline
         $\tilde{\theta}$ && & 1.23e+0 & 9.96e-1 & 2.73e-2 & 9.97e-1 & 6.16e-3 & 3.27e-3 &  &  & 2.15e-5 & 1.59e-1 & 9.96e-1\\
        $E$ & &  & 1.41e-3 & 1.72e-3 & 1.30e-4 & 1.11e-3 & 3.94e-6 & 1.76e-4 & &  & 6.92e-6 & 4.13e-4 & 3.32e-4\\
        $V$ & &  & 1.15e-3 & 1.73e-3 & 4.73e-3  & 1.12e-3 & 6.39e-4 & 5.37e-2 &  &  & 3.21e-1 & 2.60e-3 & 3.33e-4\\
        \hline
         $\tilde{\theta}$ && & 1.23e+0 & & 2.73e-2 & 9.97e-1 & 6.16e-3 &  &  &  & 1.85e-5 & 1.59e-1 & 9.96e-1\\
        $E$ & &  & 1.39e-3 &  & 1.13e-4 & 1.06e-3 & 1.85e-6 & & & & 1.23e-6 & 3.71e-4 & 1.04e-4\\
        $V$ & &  & 1.13e-3 & & 4.14e-3  & 1.06e-3 & 3.00e-4 &  &  &  & 6.64e-2 & 2.33e-3 & 1.04e-4\\
        \hline
         $\tilde{\theta}$ && &  & & 2.68e-2 & 9.98e-1 & 6.15e-3 &  &  &  & 1.98e-05 & & 9.97e-1\\
        $E$ & &  & &  & 1.94e-5 & 2.28e-5 & 1.35e-6 & & & &  1.24e-06 &  & 4.57e-5\\
        $V$ & &  &  & & 7.25e-4  & 2.28e-5 & 2.20e-4 &  &  &  & 6.24e-02 &  & 4.59e-5\\
        \hline
         $\tilde{\theta}$ && &  & & 2.68e-2 & 9.99e-1 & 6.15e-3 &  &  &  & & & 9.98e-1\\
        $E$ & &  & &  & 1.87e-5 & 2.27e-5 & 1.33e-6 & & & & &  & 4.55e-5\\
        $V$ & &  &  & & 7.01e-4  & 2.27e-5 & 2.16e-4 &  &  &  & &  & 4.57e-5\\
        \hline
         $\tilde{\theta}$ && &  & &  & 9.98e-1 & 6.16e-3 &  &  &  & & & 9.97e-1\\
        $E$ & &  & &  & & 2.26e-5 & 1.33e-6 && &  & &  & 4.51e-5\\
        $V$ & &  &  & &   & 2.26e-5 & 2.15e-4 & & &  &  &  & 4.53e-4\\
        \hline
    \end{tabular}
    \caption{Parameter estimates for solving seven inverse problems from a synthetic data generated using the given nominal parameters and variance. For each parameter subset, we display the estimate $(\tilde{\theta})$, the standard error, E and the coefficient of variation, $V = E/\tilde{\theta}$.    \label{tab:my_label5}}

\end{table}
\end{landscape}
\begin{landscape}
\begin{table}
    \centering
    \begin{tabular}{|c|c|c|c|c|c|c|c|}
    \hline
       $\theta$   & California  & Florida & Georgia & Maryland & Texas & Washington & Wisconsin \\
         \hline
        $b_1$  & 1.0000e-15 & 1.0000e-15 & 4.6495e-02 & 1.0080e-05 & 1.4196e-05 & 2.0116e-02 & 1.0000e-15\\
         \hline
        $\beta_A$  & 8.8973e-05 & 9.8043e-01 & 1.0939e+00  & 1.3360e+00 & 5.6047e-01 & 9.5295e-01 & 6.3325e-01\\
         \hline
        $\beta_S$  & 1.1213e+00  & 6.9601e-01 & 5.1042e-01 & 1.3320e+00 & 1.1989e-02 & 1.0000e+00 & 1.0773e+00\\
         \hline
        $\sigma$ & 9.9999e-01 & 1.0000e+00 & 7.5663e-01 & 1.0000e+00 & 1.0000e+00 & 7.3519e-01 & 7.1669e-01\\
         \hline
        $\gamma_1$  & 3.5769e-02 & 8.3577e-01 &  9.8459e-01 & 9.9686e-01 & 1.2284e-01 & 8.9911e-01 & 1.0000e+00\\
         \hline
        $\gamma_2$  & 9.1747e-01 & 7.8446e-01 & 7.6914e-01 & 9.6675e-01 & 6.6645e-02 & 8.9098e-01 & 1.0000e+00\\
         \hline
        $\rho$  & 1.3366e-07  & 1.0000e-15 & 1.0000e-15 & 4.5126e-05 & 5.4691e-02 & 2.9082e-06& 1.0325e-03\\
         \hline
        $\mu_1$ & 2.7409e-03 & 6.7600e-09 & 6.5427e-04 & 6.6585e-04 & 1.0000e-15 & 6.9886e-04 & 8.7380e-04 \\
         \hline
        $\mu_2$ & 1.1988e-05 & 8.1025e-04 & 3.4154e-04 & 1.0000e-15 & 1.2534e-02 & 3.1211e-06& 8.6293e-02\\
         \hline
        $\mu_3$  & 6.8398e-05 & 6.9384e-05 & 3.2020e-08 & 1.0000e-15 & 1.0000e-15 &3.6933e-06 & 6.0737e-04\\
         \hline
        $\mu_4$ & 1.0000e-15 & 1.0000e-15 & 1.0000e-15 & .0000e-15 & 1.0000e-15 & 1.2343e-05 & 1.0000e-15\\
         \hline
        $\eta$  & 1.1097e-01 &  9.4378e-01 & 1.0000e+00 & 9.9986e-01 & 3.0315e-01 & 3.9910e-01 & 4.6680e-05\\
         \hline
        $\alpha$  & 9.9980e-01 & 8.7834e-01 & 9.9967e-01 & 7.7423e-01 & 9.8525e-01  & 1.0000e+00 & 1.0000e+00 \\
         \hline
         $T^*$ & 150 & 110 & 110 & 110 & 130 & 150 & 140\\
         \hline
    \end{tabular}
    \caption{Model fit parameters for some selected states in the US, $T^*$ shows the number of days used for each state.\label{tab:my_label6}}
    
\end{table}
\end{landscape}
\end{document}